\newtcolorbox{mybox}[1][]{
  drop shadow southeast,
  enhanced,colback=red!5!white,colframe=black!75!, width=0.36\textwidth,
  #1
}
\newtcolorbox{fancybox}[1][]{
  enhanced,drop fuzzy midday shadow,
  boxrule=1pt,arc=4pt,boxsep=0pt,
  left=.5em,right=.5em,top=1ex,bottom=1ex,
  colback=olive, width=0.46\textwidth,#1
}
\newtcbox{\testbox}[1][red]
  {on line, enhanced, drop shadow southeast,arc = 0pt, outer arc = 0pt,
    colback = #1!5!white, colframe = red!75!black,
    boxsep = 0pt, left = 1pt, boxrule = 0pt}
\newtheorem{theorem}{Theorem}
\newtheorem{lemma}{Lemma}
\DeclareMathOperator{\Tr}{Tr}
\begin{document}
\title{Shuffle-QUDIO: accelerate distributed VQE with  trainability enhancement and measurement reduction}

\author{Yang Qian}
\thanks{This work was done when he was a research intern at JD Explore Academy}
\affiliation{School of Computer Science, Faculty of Engineering, The University of Sydney, Darlington, NSW 2008, Australia}
\author{Yuxuan Du}
\thanks{Corresponding author, duyuxuan123@gmail.com}
\affiliation{JD Explore Academy, Beijing 101111, China}
\author{Dacheng Tao}
\thanks{Corresponding author, dacheng.tao@gmail.com}
\affiliation{JD Explore Academy, Beijing 101111, China}
\affiliation{School of Computer Science, Faculty of Engineering, The University of Sydney, Darlington, NSW 2008, Australia}
\date{\today}

\begin{abstract}
    The variational quantum eigensolver (VQE) is a leading strategy that exploits noisy intermediate-scale quantum (NISQ) machines to tackle chemical problems outperforming classical approaches. To gain such computational advantages on large-scale problems, a feasible solution is the \textbf{QU}antum \textbf{DI}stributed \textbf{O}ptimization (QUDIO) scheme, which partitions the original problem into $K$ subproblems and allocates them to $K$ quantum  machines followed by the parallel optimization. Despite the provable acceleration ratio, the efficiency of QUDIO may heavily degrade  by  the synchronization operation. To conquer this issue, here we propose Shuffle-QUDIO to involve shuffle operations into local Hamiltonians during the quantum distributed optimization. Compared with QUDIO, Shuffle-QUDIO significantly reduces the communication frequency among quantum processors and simultaneously achieves better trainability. Particularly, we prove that Shuffle-QUDIO enables a faster convergence rate over QUDIO. Extensive numerical experiments are conducted to verify that Shuffle-QUDIO allows both a wall-clock time speedup and low approximation error in the tasks of estimating the ground state energy of molecule. We empirically demonstrate that our proposal can be seamlessly integrated with other acceleration techniques, such as operator grouping, to further improve the efficacy of  VQE. 
\end{abstract}

\maketitle

\section{Introduction}\label{sec:intro}

Quantum computing is expected to demonstrate advantages over classical computers in dealing with certain tasks, such as boson sampling \cite{spring2013boson,wang2017high,bulmer2021boundary} and integer factorization \cite{jiang2018quantum,peng2019factoring}. With the advent of noisy intermediate-scale quantum (NISQ) era \cite{preskill2018quantum,bharti2021noisy}, Google has experimentally verified that  when sampling the output of a pseudo-random quantum circuit, current NISQ devices can run   faster than the state-of-the-art classical computers \cite{arute2019quantum}. Recently, USTC has achieved more difficult sampling tasks on \textit{Zuchongzhi} 2.1 to further push the frontier of quantum computational advantages \cite{zhu2021quantum}. However, the unavoidable system noise and the restricted coherence time prevent the execution of complicated quantum algorithms on NISQ devices. To accommodate the limitations of NISQ machines, variational quantum algorithms (VQAs) \cite{mcclean2016theory,cerezo2021variational,cerezo2020variational,qian2021dilemma,tian2022recent} which employ a classical optimizer to train a parametrized quantum circuit, have emerged. Concisely, VQAs alternately interact between quantum circuits and classical optimizers, while the former evolves the quantum state and outputs classical information by measurements, and the latter is responsible for seeking the best parameters of quantum circuit to minimize the discrepancy between the predictions and the targets. Pioneer studies have verified the power of VQAs in quantum finance \cite{orus2019quantum,pistoia2021quantum}, quantum chemistry \cite{grimsley2019adaptive,arute2020hartree,kandala2017hardware,robert2021resource,kais2014introduction,wecker2015solving,cai2020quantum,wang2019accelerated,romero2018strategies,Cervera2021meta-variational,parrish2019quantum}, many-body physics \cite{huang2021provably,lee2021neural,endo2020variational}, machine learning \cite{huang2021power,huang2022quantum,du2021exploring,caro2022generalization,gili2022quantum}, and combinational optimization \cite{farhi2014quantum,zhou2020quantum,harrigan2021quantum,lacroix2020improving,hadfield2019quantum,zhou2022qaoa} from both theoretical and experimental aspects.
 
Although VQAs promise the practical  applications of NISQ machines, they are challenged by the scalability issue. The required number of measurements for VQAs scales with $O(poly(n, 1/\epsilon))$ with $n$ being the problem size and $\epsilon$ being the tolerable error, which implies an expensive runtime for large-scale problems. One canonical instance is the variational quantum eigensolver (VQE) \cite{peruzzo2014variational}, which is developed to estimate the low-lying energies and corresponding eigenstates of molecule systems.  VQE  contains two key steps. First, the electronic Hamiltonian is reformulated to the qubit Hamiltonian $H=\sum_{i=1}^M\alpha_iH_i$ through Jordan-Wigner, Bravyi-Kitaev, or parity transformations \cite{seeley2012bravyi,bravyi2002fermionic,jordan1993paulische}, where $H_i\in\{\sigma_X, \sigma_Y, \sigma_Z, \sigma_I\}^{\otimes n}$ and $\alpha_i\in \mathbb{R}$ for $\forall i \in [M]$, and $M$ is the number of Pauli operators. The property of $H$ is then estimated by a variational quantum circuit whose parameters are updated by a classical optimizer. Principally, it requires $O(poly(M, 1/\epsilon))$ queries to the quantum circuit   in each iteration to collect the updating information \cite{gonthier2020identifying}. With this regard, VQEs towards large-scale molecules request an intractable time expense on the measurements. This scalability issue impedes the journey of VQEs to the quantum advantages.

Approaches for reducing the computational overhead of quantum measurements in VQE can be roughly classified into five categories, including operator grouping \cite{ralliImplementationMeasurementReduction2020,verteletskyiMeasurementOptimizationVariational2020,zhao2020measurement,gokhaleMinimizingStatePreparations2019}, ansatz adjustment \cite{tkachenko2021correlation,zhang2022variational}, shot allocation \cite{arrasmith2020operator,van2021measurement,gu2021adaptive}, classical shadows \cite{huang2020predicting,hadfield2022measurements}, and distributed optimization \cite{andres2019automated,barratt2003parallel,duAcceleratingVariationalQuantum2021,mineh2022accelerating}. Specifically, the operator grouping strategy focuses on finding the commutativity between local Hamiltonian terms $\{H_i\}$ in $H$. The commutable Hamiltonians can be evaluated by the same measurements, which enable the measurement reduction \cite{kandala2017hardware,ralliImplementationMeasurementReduction2020,verteletskyiMeasurementOptimizationVariational2020,zhao2020measurement,gokhaleMinimizingStatePreparations2019}. Ansatz adjustment targets to tailor the layout of ansatz to reduce the circuit depth \cite{tkachenko2021correlation,tang2021qubit,grimsley2019adaptive} or the number of qubits \cite{zhang2022variational}. For example, Ref.~\cite{tkachenko2021correlation} attempts to assign two qubits with stronger mutual information to the adjacent locations with direct connectivity on the physical quantum chips, leading to shallower circuits over the original VQEs to reach a desired accuracy. Shot allocation aims to assign the number of shots among $\{H_i\}$ in a more intelligent way. A typical solution is to allocate more shots to the terms with a larger coefficient $|\alpha_i|$ and a larger variance of $\braket{H_i}$. Another measurement reduction method, classical shadows, constructs an approximate classical representation of a quantum state based on few measurements of the state \cite{huang2020predicting}. With this representation, $O(\log(M))$ measurements are enough to estimate the expectation value of whole observable with high precision.

On par with engineering the quantum part, we can accelerate the optimization of VQE by using multiple quantum processors (workers), inspired by the success of distributed optimization in deep learning and the growing number of available quantum chips. There are generally two types of distributed VQAs. The first paradigm is decomposing the primal quantum systems into multiple smaller circuits and running them in parallel  \cite{barratt2020parallel,diadamo2021distributed}. The second paradigm is  utilizing the quantum cloud server in which the problem Hamiltonian can be pre-divided into several partitions and distributed into $Q$ local quantum workers respectively. Each worker estimates the expectation value of partial local Hamiltonians with no more than $O(poly(M/Q))$ queries and delivers the result to the rest workers after a single iteration. Noticeably, such a methodology inevitably encounters the communication bottleneck, quantum circuit noise, and the risk of privacy leakage. As such,   Ref.~\cite{duAcceleratingVariationalQuantum2021} devises the \textbf{QU}antum   \textbf{DI}stributed \textbf{O}ptimization (QUDIO), a novel distributed-VQA scheme in a lazy communication manner, to address this issue. Unfortunately, the naive allocation method is not suitable for VQEs since the coefficients $\{\alpha_i\}$ of the local Hamiltonian terms $\{H_i\}$ are varied, leading to unbalanced contributions to the overall variance of Hamiltonian estimation. Such an estimation error can be exacerbated by the increased communication interval,  which renders the trade-off between the acceleration ratio and the approximation error of VQEs.

To maximally suppress the negative effects of large communication interval on the convergence rate, here we propose a new quantum distributed optimization framework, called Shuffle-QUDIO. Different from QUDIO, for every local worker, the local Hamiltonian terms are randomly shuffled and sampled without replacement according to the worker's rank before each iteration. From the statistically view, this operation alleviates the issue such that every local worker may only observe incomplete local Hamiltonians during the optimization. Moreover, the dynamic allocation of Hamiltonian terms alleviates the accumulated deviation with respect to the target Hamiltonian $H$ after a large number of local updates. In this way, Shuffle-QUDIO achieves performance improvements while keeping low communication cost. Another advantage of our proposal is its compatibility with all types of quantum hardware. This assures its potential of unifying existing quantum devices to accelerate the training of VQEs.

To theoretically exhibit the advance of our proposal, we prove that Shuffle-QUDIO allows a faster convergence rate than that of QUDIO. By leveraging the non-convex optimization theory, we exhibit that the dominate factors effecting the  convergence rate are the number of distributed quantum machines $K$, the local updates (communication interval) $W$, and the global iterations $T$, i.e., $O(poly(W, K, 1/T))$. To benchmark the performance of Shuffle-QUDIO, we conduct systematic numerical experiments on VQEs under both fault-tolerant and noisy scenarios. The achieved results confirm that Shuffle-QUDIO achieves smaller approximation error over QUDIO, as well as lower communication overhead among clients and server, and sub-linear speedup ratio. In addition, we demonstrate that the performance of Shuffle-QUDIO under the noisy setting can be further boosted by combining the advanced operator grouping strategy.

The remaining parts of this paper are organized as follows. Section \ref{sec:bg} briefly introduces the preliminary knowledge about the optimization of variational quantum circuits. Section \ref{sec:sq} presents the pipeline of the proposed algorithm and presents the convergence analysis. Section \ref{sec:nr} exhibits numerical simulation results. Section \ref{sec:dis} gives a summary and discusses the outlook.

\section{Preliminary}\label{sec:bg}

The essence of VQE is tuning  an $n$-qubit parameterized quantum state $\rho({\bm{\theta}})=\ket{\psi(\bm{\theta})}\bra{\psi(\bm{\theta})}$ with $\bm{\theta}\in\mathbb{R}^P$ to minimize the energy of a problem Hamiltonian 
\begin{equation}\label{eqn:ham-def}
	H=\sum_{i=1}^M\alpha_iH_i\in \mathbb{C}^{2^n\times 2^n},
\end{equation}
where  $H_i$  refers to the  $i$-th local Hamiltonian term with the weight $\alpha_i$. The energy minimization is formulated by the  loss function 
\[L(\bm\theta,H):=\Tr(\rho({\bm{\theta}})H)=\sum_{i=1}^M\alpha_i\Tr(\rho({\bm{\theta}})H_i).\] With a slight abuse of notation, we denote $H_i$ as $\alpha_iH_i$ and simplify the above loss function as \[L(\bm\theta,H)=\sum_{i=1}^M\Tr(\rho({\bm{\theta}})H_i).\]
The parameterized quantum state is  prepared by an ansatz with $\ket{\psi(\bm{\theta})} =U(\bm{\theta})\ket{\phi}$ and $\ket{\phi}$ being an initial quantum state. A generic form of $U(\bm{\theta})$ is  
\begin{equation}\label{eqn:circuit}
    U(\bm{\theta}) = \prod_{l=1}^LU_{e}\prod_{i=1}^N\exp(-i\theta_{l,i}O_i),
\end{equation}
where $O_i$ is a Hermitian matrix and $U_{e}$ denotes a fixed unitary  composed of multi-qubit gates. By iteratively updating the circuit parameters $\bm{\theta}$ to minimize the loss, the quantum state $\rho({\bm{\theta}})$ is expected to approach the eigenstate of $H$ with the minimum eigenvalue.

\subsection{Optimization of VQE}

Gradient descent (GD) based optimizers are widely used in previous literatures of VQE. The parameters $\bm{\theta}^{t+1}$ at the $(t+1)$-th iteration is updated alongside the steepest descent direction with learning rate $\eta$, i.e.,
\begin{equation}\label{eq:gd}
    \bm{\theta}^{t+1}=\bm{\theta}^t-\eta\nabla L(\bm{\theta}^t,H).
\end{equation}
Unlike classical neural networks that utilize gradient back-propagation  to update parameters \cite{lecun1988theoretical}, VQE adopts the parameter-shift rule \cite{banchi2021measuring,wierichs2022general} to obtain the unbiased estimation of the gradient. The gradient with respect to the $i$-th parameter is  
\begin{equation}
    \frac{\partial L(\bm{\theta},H)}{\partial\theta_i}=\frac{L(\bm{\theta}+\frac{\pi}{2}\bm{e}_i,H)-L(\bm{\theta}-\frac{\pi}{2}\bm{e}_i,H)}{2},
\end{equation}
where $\bm{e}_i$ denotes the indicator vector for the $i$-th element of parameter vector $\bm{\theta}$. When the number of trainable parameters is  $P$, the required number of measurements to complete the gradient computation scales with $O(poly(PM))$ without applying any measurement reduction strategies.  

\subsection{Optimization of the distributed VQE}

To accelerate the training of VQA, Ref.~\cite{duAcceleratingVariationalQuantum2021} proposed the \textbf{QU}antum \textbf{DI}stributed \textbf{O}ptimization (QUDIO) scheme. The key idea of QUDIO is to partition the problem Hamiltonian $H$ in Eq.~(\ref{eqn:ham-def}) into several groups and distribute them into multiple quantum processors to be manipulated in parallel.  Mathematically, suppose that there are $K$ available quantum processors $\{\mathcal{Q}_i\}_{i=1}^K$, the Hamiltonian terms $\{H_i\}_{i=1}^M$ are divided into $K$ subgroups $\{\mathcal{S}_i\}_{i=1}^K$, where $\mathcal{S}_i=\cup_{j\in S_i} \{H_j\}$, so that $\sum_{i=1}^K |S_i| =M$ and $S_i\cap S_j=\emptyset$ when $i\neq j$. 

In the initialization process, the $i$-th subgroup $\mathcal{S}_i$ is allocated to the $i$-th quantum processor $\mathcal{Q}_i$ for $\forall i \in [K]$. All local processors share the same initial parameters $\bm{\theta}^0$   with $\bm{\theta}_i^{(0,0)}=\bm{\theta}^0$ for $\forall i\in [K]$. The subsequent training process alternately switches between the local updates and the global synchronization. During the phase of local updates, each quantum processor follows the gradient descent rule to update the parameters to minimize the local loss function $L(\bm{\theta}_i,H_{S_i})=\sum_{j\in S_i}\Tr(\rho({\bm{\theta}_i})H_j)$, i.e., the parameters of the $i$-th processor  at the $(t,w)$-th step is updated as Eq.~(\ref{eq:gd}). After fulfilling $W$ local updates, all parameters from distributed quantum processors are synchronized by averaging the collected parameters $\bm{\theta}^{t+1}=\frac{1}{K}\sum_{i=1}^K\bm{\theta}_i^{(t,W)}$. Repeating the above two phases until the termination conditions (e.g. the maximum number of iterations) are met,    the synchronized parameters are returned as the final parameters.

Ignoring the communication overhead among quantum processors, QUDIO with $W=1$ is expected to linearly accelerate the optimization  of VQE. However, the communication bottleneck could degrade the acceleration efficiency. An optional solution is to increase $W$ to reduce the communication frequency. As indicated in \cite{duAcceleratingVariationalQuantum2021}, the performance of VQA witnesses a rapid drop with the increased $W$.

\section{Shuffle-QUDIO for  VQE}\label{sec:sq}

The performance of  QUDIO suffers from a high sensitivity of the communication interval. Intuitively, this issue originates from the fact that each quantum processor in QUDIO only perceives a static subset of the whole observable set during the entire training process. The $i$-th processor updates its local parameters based on the partial observations before communicating with other processors. Meantime, the coefficients $\{\alpha_i\}$ and the variance of Pauli operators $\{H_i\}$ differ from each other, leading to different contributions to the expectation estimation of $H$. As a result, the local processor fails to characterize the full property of the problem Hamiltonian $H$. With  multiple local updates, the accumulation of bias further degrades the performance of QUDIO. To tackle this issue, here we devise a novel  quantum distributed optimization scheme, called Shuffle-QUDIO, to avoid the performance drop when synchronizing in a low frequency. 

\begin{figure*}[htp]
\captionsetup[subfigure]{justification=centering}
\centering
\includegraphics[width=0.9\textwidth]{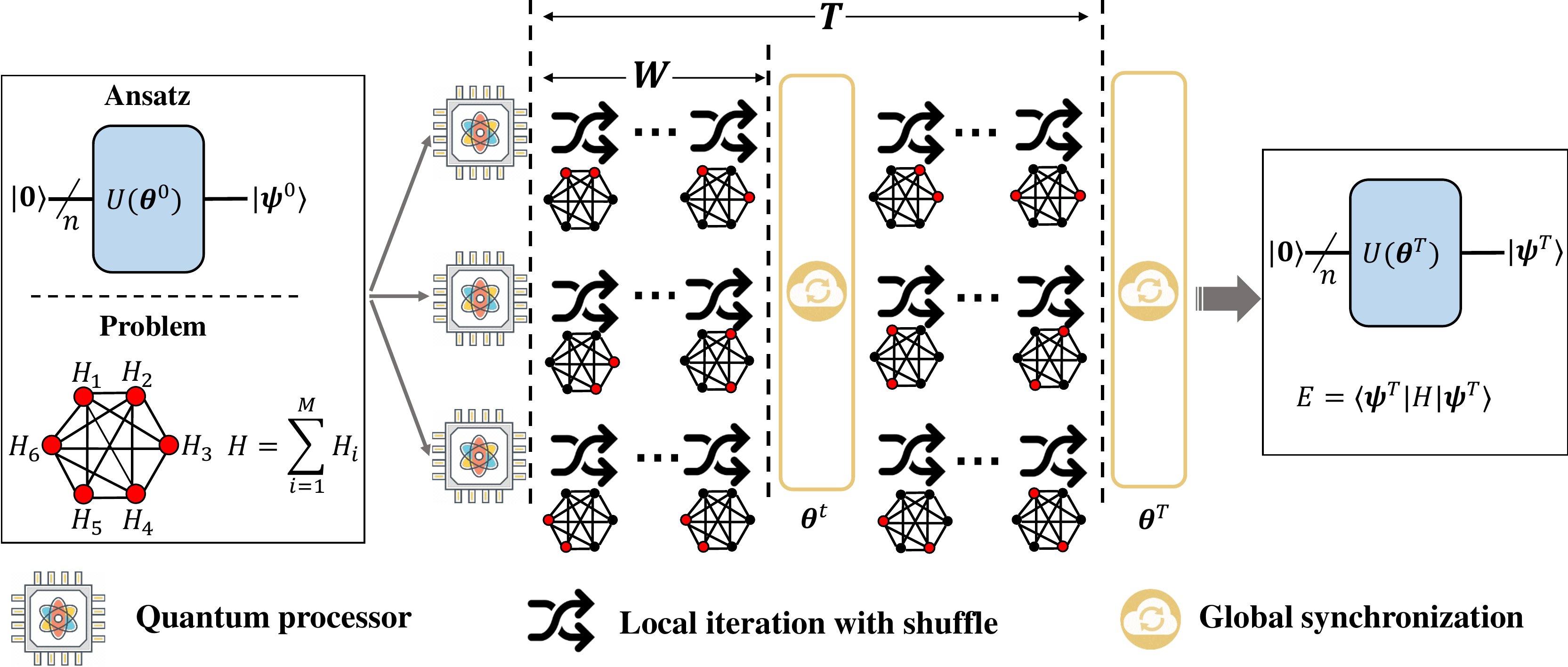}
\caption{\small{\textbf{The scheme of Shuffle-QUDIO.} The Shuffle-QUDIO consists of three subroutines, including initialization, local updates and global synchronization. During the phase of initialization, multiple copies of the original ansatz and the corresponding problem Hamiltonian $H$ are dispatched into each local processor. Note that each processor shares the same seed of the random number generator. For each iteration in the local updates, the set of observables $\{H_i\}_{i=1}^M$ is \textit{randomly shuffled} and the $i$-th local processor picks the subset of whole observables according to the assigned random number. In this way, the observables of each processor do not overlap with each other and the union of their observables exactly constitutes the problem Hamiltonian $H$. After $W$ local updates, the parameters of each local ansatz are aggregated and then reassigned to all local processors, which is called global synchronization. When the maximal number $T$ of iterations is reached, Shuffle-QUDIO executes the final synchronization and outputs the trained parameters.}}
\label{fig:scheme}
\end{figure*}

\subsection{Algorithm descriptions}

The paradigm of Shuffle-QUDIO is  depicted in Fig.~\ref{fig:scheme}, which consists of three steps.
\begin{enumerate}
    \item \textbf{Initialization}. The variational quantum circuit  $U(\bm{\theta})$ in Eq.~(\ref{eqn:circuit})  of each quantum processor is initialized with the same parameters $\bm{\theta}_i^{(0,0)}=\bm{\theta}^0$ for $i=\{1,...,K\}$ and all local Hamiltonian terms $\{H_i\}$ are distributed to each processor.
    \item \textbf{Local updates}. Each processor independently updates the parameters $\bm{\theta}^{(t,w)}_i$ following the gradient descent principle. First, Shuffle-QUDIO randomly shuffles the sequence of local Hamiltonian terms. Note that the random number of each processor is generated from the same random seed. Assuming the permutation vector is denoted by $\pi^{(t,w)}$, the visible Hamiltonians for the $i$-th processor at the $t$-th iteration are $\mathcal{H}^{(t,w)}_i=\{H_{\pi^{(t,w)}(j)}\}_{j=\frac{M}{K}(i-1)+1}^{\frac{M}{K}i}$ (suppose $M$ is exactly divided by $K$). Then each processor estimates the gradient $\bm{g}_i^{(t,w)}$ by the parameter-shift rule. Note that $\bm{g}$ denotes the estimated gradient on the quantum device due to the finite number of measurements, while $\nabla L$ refers to the corresponding accurate gradient. The parameters are updated as
    \begin{equation}
        \bm{\theta}^{(t,w+1)}_i=\bm{\theta}^{(t,w)}_i-\eta \bm{g}_i^{(t,w)},
    \end{equation}
    where $\eta$ is the learning rate.
    Repeat the above local updates for $W$ local steps.
    \item \textbf{Global synchronization}. Once the local updates are completed, the central server synchronizes parameters among all quantum processors in an averaged manner, i.e.,
    \begin{equation}
        \bm{\theta}^{t+1}=\frac{1}{K}\sum_{i=1}^K \bm{\theta}^{(t,W)}_i.
    \end{equation}
    If the number of the global iterations reaches $T$,  the parameters $\bm{\theta}^T$ are returned as the output; otherwise, return back to step 2.
\end{enumerate}
The pseudo code of Shuffle-QUDIO is summarized in Fig.~\ref{alg:shuffle-qudio}. Compared with conventional VQE which sequentially measures the expectation value of every single observable, the strategy of distributed parallel optimization accelerates the estimation of the complete observables by $K$ times. Furthermore, the shuffling operation alleviates the deviation of the optimization direction during the local updates and thus warrants a stabler performance after increasing communication interval. This is because in a statistical view, each processor can leverage the  information of all local Hamiltonian terms to update local parameters in the training process. 
\begin{lemma}\label{the:shuffle}
Let $\{H_1,...,H_M\}$ be $M$ Hermitian matrices in $\mathbb{C}^{2^n\times 2^n}$, $H=\sum_{i=1}^MH_i$. Let $\rho(\bm{\theta})$ be an $n$-qubit quantum state parameterized by $\bm{\theta}$. For any $k\in \{1,...,M\}$, let $H_{\pi(1)},...,H_{\pi(k)}$ be uniformly sampled without replacement from $\{H_1,...,H_M\}$. Let $L=\Tr(\rho(\bm{\theta})H)$ and $L_m=\Tr(\rho(\bm{\theta})\sum_{i=1}^mH_{\pi(i)})$. Then we have
\begin{equation}
    \mathbb{E}[\frac{\partial L_m}{\partial \bm{\theta}}]=\frac{m}{M}\frac{\partial L}{\partial \bm{\theta}}.
\end{equation}
\end{lemma}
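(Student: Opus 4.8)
The plan is to exploit two facts: the derivative $\partial/\partial\bm\theta$ acts linearly through the trace, and uniform sampling without replacement is exchangeable, so that each Hamiltonian term enters the partial sum with the same marginal probability $m/M$. The key structural observation is that the randomness lives entirely in the choice of index set $\{\pi(1),\dots,\pi(m)\}$, whereas $\rho(\bm\theta)$ and its derivative are deterministic. This separation is what decouples the expectation from the quantum part and makes the computation elementary.

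First I would push the derivative inside the trace. Since $L_m=\Tr(\rho(\bm\theta)\sum_{i=1}^m H_{\pi(i)})$ and the trace is linear, I can write
\begin{equation}
    \frac{\partial L_m}{\partial\bm\theta}=\Tr\!\left(\frac{\partial\rho(\bm\theta)}{\partial\bm\theta}\sum_{i=1}^m H_{\pi(i)}\right)=\sum_{i=1}^m\Tr\!\left(\frac{\partial\rho(\bm\theta)}{\partial\bm\theta}H_{\pi(i)}\right),
\end{equation}
where the identity holds componentwise in $\bm\theta$. Because $\partial\rho/\partial\bm\theta$ does not depend on the random permutation $\pi$, taking $\mathbb{E}[\cdot]$ commutes with the (deterministic) trace, so the whole problem reduces to evaluating the expected sampled operator $\mathbb{E}[\sum_{i=1}^m H_{\pi(i)}]$.

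The crux is therefore the counting argument for sampling without replacement. I would introduce indicator variables $\mathbbm{1}_{j}$ recording whether term $H_j$ appears among the first $m$ draws, so that $\sum_{i=1}^m H_{\pi(i)}=\sum_{j=1}^M\mathbbm{1}_{j}H_j$. By the exchangeability of uniform sampling without replacement, every term is equally likely to be selected, hence $\mathbb{E}[\mathbbm{1}_j]=\binom{M-1}{m-1}/\binom{M}{m}=m/M$ for each $j$. Linearity of expectation then gives $\mathbb{E}[\sum_{i=1}^m H_{\pi(i)}]=\frac{m}{M}\sum_{j=1}^M H_j=\frac{m}{M}H$. Substituting back and using $\frac{\partial L}{\partial\bm\theta}=\Tr(\frac{\partial\rho}{\partial\bm\theta}H)$ yields the claimed identity.

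I do not expect a genuine obstacle here, since the statement is essentially an unbiasedness result; the only subtlety is making the marginal probability $m/M$ rigorous. The cleanest route is the indicator/symmetry argument above rather than manipulating permutations directly, as it avoids double counting and exposes the without-replacement structure transparently. The one point that warrants care is justifying the interchange of $\mathbb{E}$ and $\Tr$, which follows immediately once one notes that $\partial\rho/\partial\bm\theta$ is well defined and independent of $\pi$.
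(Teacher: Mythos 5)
Your proposal is correct, and its probabilistic core --- linearity of expectation plus the fact that uniform sampling without replacement picks up each term with marginal weight $m/M$ --- is exactly the engine of the paper's proof; the two arguments differ only in how the gradient is handled. The paper first proves unbiasedness at the level of the loss, $\mathbb{E}_\pi[L_m]=\frac{m}{M}L$, by computing $\mathbb{E}_\pi[H_{\pi(i)}]=\frac{1}{M}\sum_{j=1}^M H_j$ position by position (the dual of your indicator count over terms, $\mathbb{E}[\mathbbm{1}_j]=m/M$), and then lifts this to gradients through the parameter-shift rule, expressing $\partial L_m/\partial\theta_i$ as a difference of loss evaluations at $\bm\theta\pm\frac{\pi}{2}\bm{e}_i$ and applying the loss-level identity at the shifted points. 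You instead differentiate the trace directly, using that $\partial\rho/\partial\bm\theta$ is deterministic and independent of $\pi$, and reduce everything to $\mathbb{E}[\sum_{i=1}^m H_{\pi(i)}]=\frac{m}{M}H$. Your route is slightly more general and more elementary: it requires only differentiability of $\rho(\bm\theta)$, not an ansatz compatible with the parameter-shift rule, and it avoids the paper's somewhat abusive display in which the chain $\frac{\partial L_m}{\partial\theta_i}=\frac{1}{2}(L_m(\bm\theta+\frac{\pi}{2}\bm{e}_i)-L_m(\bm\theta-\frac{\pi}{2}\bm{e}_i))=\frac{m}{M}\frac{\partial L}{\partial\theta_i}$ has a second equality that holds only after taking expectation, not pathwise as written. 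The paper's route, for its part, stays entirely at the level of loss evaluations, which mirrors how gradients are actually obtained on hardware in this setting. In both arguments the interchange of expectation with trace and derivative is trivially justified, as the expectation is a finite average over permutations.
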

\noindent Refer to Appendix \ref{pr:shuffle} for proof details. Lemma \ref{the:shuffle} implies that the direction of the expected gradient of each local quantum processor in Shuffle-QUDIO is unbiased. This guarantees that the local quantum circuits are individually optimized forward along the right direction when they do not communicate frequently with each other, which narrows the performance gap between a single processor and the synchronized model. By contrast, during the local updates of QUDIO, there always exists a bias between the locally estimated gradient and the global gradient. Specifically, Shuffle-QUDIO achieves smaller gradient deviation than vanilla QUDIO, as indicated by the following lemma, whose proof is provided in Appendix \ref{pr:grad-bias}.
\begin{lemma}\label{lemma:grad-bias}
Assume the norm of local gradient $\bm{g}_k(\bm{\theta}, H_k)$ is bounded by $||\bm{g}_k||^2\leq G^2$. Compared with QUDIO, the discrepancy between the local gradient $\bm{g}_k(\bm{\theta}, H_k)$  and the global gradient $||\nabla L(\bm{\theta},H)||$ in Shuffle-QUDIO is reduced from $2(K^2+1)G^2$ to $(K-1)^2G^2$.
\end{lemma}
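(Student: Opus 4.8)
The plan is to read the stated discrepancy as the squared Euclidean distance $\|\bm{g}_k(\bm{\theta},H_k)-\nabla L(\bm{\theta},H)\|^2$ between the gradient perceived by a single processor and the exact global gradient, and to bound this quantity under each allocation scheme so that the Shuffle-QUDIO constant comes out the smaller of the two. The common ingredient I would isolate first is the partition identity, valid for both schemes: at a fixed $\bm{\theta}$ the $K$ local gradients decompose the global one, $\sum_{j=1}^{K}\bm{g}_j=\nabla L$, because the assigned index blocks are disjoint and their union exhausts $\{1,\dots,M\}$. For Shuffle-QUDIO this partition is exactly the one induced by the shared random permutation $\pi^{(t,w)}$, and Lemma~\ref{the:shuffle} certifies that each $\bm{g}_k$ is in expectation aligned with $\frac{1}{K}\nabla L$, which is what makes the refined estimate below a faithful measure of the per-processor deviation rather than an accident of one realization.

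For the QUDIO side I would deliberately use only the crude, scheme-agnostic split. From $\|\bm{g}_k-\nabla L\|^2\le 2\|\bm{g}_k\|^2+2\|\nabla L\|^2$, I would control the global-gradient norm through the partition identity and Cauchy--Schwarz, $\|\nabla L\|^2=\|\sum_{j=1}^{K}\bm{g}_j\|^2\le K\sum_{j=1}^{K}\|\bm{g}_j\|^2\le K^2G^2$, and then apply the hypothesis $\|\bm{g}_k\|^2\le G^2$ to the leading term. This immediately gives $2G^2+2K^2G^2=2(K^2+1)G^2$, the QUDIO constant.

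For Shuffle-QUDIO I would instead exploit the cancellation that the partition identity makes available. Writing $\bm{g}_k-\nabla L=-\sum_{j\ne k}\bm{g}_j$ and applying Cauchy--Schwarz over the remaining $K-1$ blocks yields $\|\bm{g}_k-\nabla L\|^2=\|\sum_{j\ne k}\bm{g}_j\|^2\le (K-1)\sum_{j\ne k}\|\bm{g}_j\|^2\le (K-1)^2G^2$, where the last inequality applies $\|\bm{g}_j\|^2\le G^2$ uniformly across the $K-1$ off-diagonal blocks. Comparing the two constants then establishes the claimed reduction from $2(K^2+1)G^2$ to $(K-1)^2G^2$.

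The step I expect to be the main obstacle is not the algebra but justifying why the cancellation-based refinement should be charged to Shuffle-QUDIO while QUDIO is charged the looser constant, given that the identity $\sum_j\bm{g}_j=\nabla L$ holds formally for both. The resolution I would make explicit is that the refinement is legitimate for Shuffle-QUDIO precisely because each block is a uniform random draw, so the unbiasedness of Lemma~\ref{the:shuffle} lets one pass from a single permutation to the expected deviation without the persistent bias that a fixed QUDIO partition accumulates over the $W$ local steps. Accordingly I would frame the QUDIO bound as the worst case over arbitrary static partitions and the Shuffle-QUDIO bound as the expectation over the shared permutation, taking care that the norm hypothesis $\|\bm{g}_j\|^2\le G^2$ is invoked uniformly on every block when Cauchy--Schwarz is applied.
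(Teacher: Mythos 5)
Your QUDIO half coincides exactly with the paper's proof: the crude split $\|a-b\|^2\le 2\|a\|^2+2\|b\|^2$ together with $\|\nabla L\|^2=\|\sum_{k}\bm{g}_k\|^2\le K^2G^2$ and $\|\bm{g}_k\|^2\le G^2$ gives $2(K^2+1)G^2$. Your Shuffle-QUDIO half, however, takes a genuinely different route. The paper works in expectation: it expands $\mathbb{E}_{H_k|\bm{\theta}_k^t}\left[\|\nabla L(\bm{\theta}_k^t,H)-\bm{g}_k^t\|^2\right]$, uses Lemma \ref{the:shuffle} (with $m=M/K$) to evaluate the cross term as $-\frac{2}{K}\|\nabla L\|^2$, and then applies $\|\nabla L\|^2\le K^2G^2$ to get $(1-\frac{2}{K})K^2G^2+G^2=(K-1)^2G^2$, which needs $K\ge 2$ so that the coefficient $1-\frac{2}{K}$ is nonnegative. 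You instead use the deterministic cancellation $\nabla L-\bm{g}_k=\sum_{j\ne k}\bm{g}_j$ plus Cauchy--Schwarz, which yields the same constant pointwise, with no expectation and no restriction on $K$ --- and, notably, without ever actually using Lemma \ref{the:shuffle}: your appeal to it is decorative, since the algebra never invokes unbiasedness.

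That extra strength is precisely where your proof fails to establish the lemma \emph{as stated}. The lemma is a comparative claim --- the discrepancy is ``reduced from $2(K^2+1)G^2$ to $(K-1)^2G^2$'' --- and the obstacle you flagged is genuine, while your proposed resolution does not close it. The partition identity $\sum_j\bm{g}_j=\nabla L$ holds for \emph{every} disjoint decomposition of $\{H_i\}$, static or shuffled, because the gradient is linear in the observable; so your cancellation argument proves the $(K-1)^2G^2$ bound for QUDIO too, and ``worst case over arbitrary static partitions'' is therefore still $(K-1)^2G^2$, not $2(K^2+1)G^2$. Under your route both schemes satisfy the same bound and the comparison evaporates. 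The paper's derivation is what makes the comparison meaningful within its framework: for a fixed partition it assumes no control over the sign or magnitude of the cross term $\langle\nabla L,\bm{g}_k\rangle$ (treating $\bm{g}_k$ and $\nabla L$ as arbitrary vectors obeying the norm bounds, hence the generic two-term bound), whereas the shuffle's unbiasedness pins the expected cross term at $\frac{1}{K}\|\nabla L\|^2$; it is exactly this mechanism, unavailable for QUDIO in the paper's analysis, that produces the improvement later reused in the proof of Theorem \ref{the:convergence}.
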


\begin{figure}[htp]
\begin{algorithm}[H]
   \begin{algorithmic}[1]
   \State \textbf{Input}: The initialized parameters $\bm{\theta}^{(0)}\in [0,2\pi)^{P}$, the employed loss function $\mathcal{L}=\Tr(\rho(\bm{\theta})H)$, the given Hamilton $H=\sum_{i=1}^{M}H_i\in \mathbb{C}^{2^n\times 2^n}$, the hyper-parameters $\{K,\eta, W,T\}$  \;
   \State Initialize the permutation vector $\pi^{(0)}=[1,2,...,M]$ of Hamiltonian term $H_i$ in order \;
 \For{$t=0,\cdots,T-1$} \;
\For{Quantum processor $\mathcal{Q}_i$, $\forall i\in[K]$ \textbf{in parallel}} \;
  \State $\bm{\theta}_{i}^{(t,0)}= \bm{\theta}^{(t)}$ \;
  \State $\pi^{(t,0)}=\pi^{(t)}$
 \For{$w=1, \cdots, W-1$}
  \State Obtain $\pi^{(t,w)}$ by randomly shuffling the elements of $\pi^{(t,w-1)}$ \;
  \State Obtain the subset of Hamiltonian terms $\mathcal{H}^{(t,w)}_i=\{H_{\pi^{(t,w)}(j)}\}_{j=\frac{M}{K}(i-1)+1}^{\frac{M}{K}i}$ according to processor's rank $i$ \;
  \State Compute the estimated gradients $g_{i}^{(t,w)}$ \;
  \State Update  $\bm{\theta}_{i}^{(t, w + 1)}=\bm{\theta}_{i}^{(t,w)} - \eta g_{i}^{(t,w)}$ \;
 \EndFor
\EndFor
  \State Synchronize $\bm{\theta}^{(t + 1)} =   \frac{1}{K}\sum_{i=1}^K \bm{\theta}_{i}^{(t,W)}$
\EndFor
\State \textbf{Output:} $\bm{\theta}^{T}$
\end{algorithmic}
\end{algorithm}
\caption{\small{\textbf{The Pseudocode of Shuffle-QUDIO}.}}
\label{alg:shuffle-qudio}
\end{figure}

\subsection{Convergence analysis}
We next show the convergence guarantee of Shuffle-QUDIO. When running VQE on NISQ devices, the system imperfection introduces noise into the optimization. To this end, we consider the worst scenario in the convergence analysis, where the system noise is modeled by the depolarizing channel. Mathematically, the depolarizing channel $\mathcal{N}_p$ transforms the quantum state  $\rho\in\mathbb{C}^{2^n\times 2^n}$ to $\mathcal{N}_p(\rho)=(1-p)\rho+p\mathbb{I}/2^n$, and with increasing the noise strength $p$, the quantum state finally evolves to the maximally mixed state. As proved in \cite{du2020learnability}, the depolarizing channel applied on each circuit depth can be merged at the end of the quantum circuit. Therefore, without loss of generality, the estimated gradient with respect to the $i$-th parameter is
\begin{equation}
    \frac{\partial \overline{L}(\bm{\theta},H)}{\partial \theta_i}=(1-p)\frac{\partial L(\bm{\theta},H)}{\partial \theta_i}.
\end{equation}
The convergence rate of Shuffle-QUDIO is summarized by the following theorem whose proof is provided in Appendix \ref{app:proof1}.
 
\begin{theorem}\label{the:convergence}
Let the gradient of loss function $L$ be $F$-Lipschitz continuous, $G$ be the upper bound of the gradient norm, $\eta$ be the learning rate of optimizer, $p$ be the strength of depolarizing noise, $K$ and $W$ be the number of distributed quantum processor and local iterations respectively, the convergence of Shuffle-QUDIO in the noisy scenario is summarized as
\begingroup
\allowdisplaybreaks
    \begin{align}
        &\frac{1}{T}\sum_{t=1}^T||\nabla L(\bm\theta^t)||^2\leq \frac{2(L(\bm\theta^1)-L(\bm\theta^{T+1}))}{\eta T}\nonumber\\
        &+\frac{4F^2\eta^2W^2G^2(K-1)}{KT}\nonumber\\
        &+\frac{(2K(K-2+2p)+(\eta F+1)(1-p)^2)G^2}{T}
    \end{align}
\endgroup
\end{theorem}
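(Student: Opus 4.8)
The plan is to follow the standard non-convex analysis for local-update distributed optimization, adapted to the shuffled local Hamiltonians and the depolarizing noise. The starting point is the descent inequality implied by $F$-Lipschitz continuity of the gradient: for consecutive synchronized iterates,
\[
L(\bm\theta^{t+1}) \le L(\bm\theta^t) + \langle \nabla L(\bm\theta^t),\, \bm\theta^{t+1}-\bm\theta^t\rangle + \frac{F}{2}\|\bm\theta^{t+1}-\bm\theta^t\|^2.
\]
First I would write the synchronized increment $\bm\theta^{t+1}-\bm\theta^t$ explicitly. Since synchronization averages the $K$ workers and each worker performs $W$ local steps with the noisy estimated gradient, the increment equals $-\tfrac{\eta}{K}\sum_{i=1}^K\sum_{w=0}^{W-1}\bm g_i^{(t,w)}$, and the depolarizing channel contributes the factor $(1-p)$ relating $\overline{L}$ to $L$. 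This is where the constants $(1-p)$ and $(1-p)^2$ in the final bound originate.

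Next I would decompose the cross term by adding and subtracting $\nabla L(\bm\theta^t)$ inside the inner product, so that one piece yields the principal descent term of the form $-c\,\eta\,(1-p)\|\nabla L(\bm\theta^t)\|^2$, while the remainder measures the gap between the averaged local gradients and the true global gradient. Here Lemma~\ref{the:shuffle} is essential: it certifies that each worker's shuffled local gradient is unbiased for a $\tfrac{1}{K}$-fraction of $\nabla L$, so the average over workers recovers $\nabla L$ in expectation and the first-order bias vanishes. The residual gap is then controlled by Lemma~\ref{lemma:grad-bias}, which replaces the $2(K^2+1)G^2$ deviation of QUDIO with the smaller $(K-1)^2G^2$; combined with the noise-dependent pieces this produces the $\bigl(2K(K-2+2p)+(\eta F+1)(1-p)^2\bigr)G^2/T$ contribution after telescoping.

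The remaining ingredient is the client-drift bound. Since the gradient norm is bounded by $G$, each local step moves a worker by at most $\eta G$, so after $w$ steps the deviation satisfies $\|\bm\theta_i^{(t,w)}-\bm\theta^t\| \le \eta w G \le \eta W G$. Applying $F$-smoothness converts this parameter drift into the estimate $\|\nabla L(\bm\theta_i^{(t,w)})-\nabla L(\bm\theta^t)\| \le F\eta W G$, which, squared and aggregated over the $W$ local steps and $K$ workers, yields the $4F^2\eta^2W^2G^2(K-1)/(KT)$ term; the $(K-1)/K$ factor reflects the variance reduction from averaging independent workers. I would then bound the quadratic term $\tfrac{F}{2}\|\bm\theta^{t+1}-\bm\theta^t\|^2$ with the same $G$ bound, telescope over $t=1,\dots,T$, and divide by $T$ to isolate $\tfrac{1}{T}\sum_t\|\nabla L(\bm\theta^t)\|^2$ on the left, giving the leading $2(L(\bm\theta^1)-L(\bm\theta^{T+1}))/(\eta T)$ term.

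The main obstacle I anticipate is the bookkeeping that couples the shuffling-induced unbiasedness of Lemma~\ref{the:shuffle} with the drift accumulated over $W$ local steps, while simultaneously tracking the depolarizing factor $(1-p)$ through every term. In particular, matching the exact constants---ensuring the cross term delivers precisely the QUDIO-versus-Shuffle improvement of Lemma~\ref{lemma:grad-bias} rather than a looser surrogate, and correctly propagating $(1-p)$ into both the linear and the quadratic pieces---requires disciplined expectation arguments over the random permutation and the measurement noise, and this constant-tracking is where the analysis is most delicate.
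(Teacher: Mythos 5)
Your plan follows essentially the same route as the paper's proof: the $F$-smoothness descent inequality on the averaged iterate, a decomposition of the cross term into a $-\|\nabla L(\bm\theta^t)\|^2$ descent piece plus the gap between the averaged local gradients and the global gradient (controlled exactly as you say, by the shuffling unbiasedness of Lemma~\ref{the:shuffle} and the Lemma~\ref{lemma:grad-bias}-style deviation bound, which in the noisy case becomes $(K-1+p)^2G^2$), a client-drift bound of order $\eta^2W^2G^2(K-1)/K$ from the bounded-gradient assumption, and a final telescope over $t$. The one point where your bookkeeping would diverge from the stated constants is the extraction of the descent term: the paper uses the identity $\braket{\bm a,\bm b}=\frac{1}{2}(\|\bm a\|^2+\|\bm b\|^2-\|\bm a-\bm b\|^2)$, which keeps the coefficient of $\|\nabla L(\bm\theta^t)\|^2$ at exactly $-\eta/2$ with no $(1-p)$ factor and produces the extra $-\frac{\eta}{2K^2}\|\sum_k\overline{\bm g}^t_k\|^2$ that yields the $(\eta F+1)(1-p)^2$ constant, whereas your proposed $-c\,\eta\,(1-p)\|\nabla L(\bm\theta^t)\|^2$ descent piece would push a $1/(1-p)$ into the leading $2(L(\bm\theta^1)-L(\bm\theta^{T+1}))/(\eta T)$ term and give slightly looser constants, though the overall form of the bound is unaffected.
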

\noindent Theorem \ref{the:convergence} reveals that an increased quantum noise rate $p$ leads to poor convergence of Shuffle-QUDIO, which emphasizes the significance of error mitigation \cite{endo2018practical,endo2021hybrid,strikis2021learning,du2020quantumsearch} in quantum optimization. Meantime, the shorter communication interval $W$ among distributed quantum processors guarantees a better performance of Shuffle-QUDIO. Note that although a large $W$ still hinders the distributed optimization, Shuffle-QUDIO achieves a relatively smaller sensitivity to $W$ than that of QUDIO.

From the technical view, although the proof of Theorem \ref{the:convergence} is derived from the classical results on local SGD \cite{haddadpour2019local}, there are some key differences between them. First, in classical local SGD, each worker independently samples a mini-batch from the whole dataset without other limitations. By contrast, the distributed quantum processors randomly sample the local Hamiltonian terms without replacement in each local iteration, which means that the Hamiltonian terms of each processor do not overlap and the union exactly constitutes the complete molecule Hamiltonians. This special sampling method guarantees the integrity of the problem Hamiltonian, but poses a challenge for theoretical analysis. Second, our analysis does not rely on the strong assumptions, such as convexity or Polyak-Lojasiewicz (PL) condition \cite{sweke2020stochastic}. Furthermore, the quantum noise in NISQ devices inevitably shifts the quantum state and biases the estimated gradients, which differentiates VQE from classical machine learning.

\section{Numerical results}\label{sec:nr}
To verify the effectiveness of Shuffle-QUDIO, we apply it to estimate the ground state of several molecules with the lowest energy. Jordan-Wigner transformation \cite{jordan1993paulische} is employed to transform these electronic Hamiltonians into the qubit Hamiltonians represented by Pauli operators. For example, the LiH system is totally described by $12$ qubits and $631$ local Pauli terms $\{\sigma_I,\sigma_X,\sigma_Y,\sigma_Z\}^{\otimes 12}$. The ansatz is designed in a hardware-efficient style inspired by \cite{kandala2017hardware}, whose layout is shown in Fig.~\ref{fig:hea}. We conduct numerical experiments on classical device with Intel(R) Xeon(R) Gold 6267C CPU @ 2.60GHz and 128 GB memory. For each setting, the experiment is repeated for $5$ times with different random seeds to mitigate the effect of randomness. Stochastic gradient descent is used to update trainable parameters, where the learning rate is set as $\eta = 0.4$.

\begin{figure}[htp]
\captionsetup[subfigure]{justification=centering}
\centering
\includegraphics[width=0.4\textwidth]{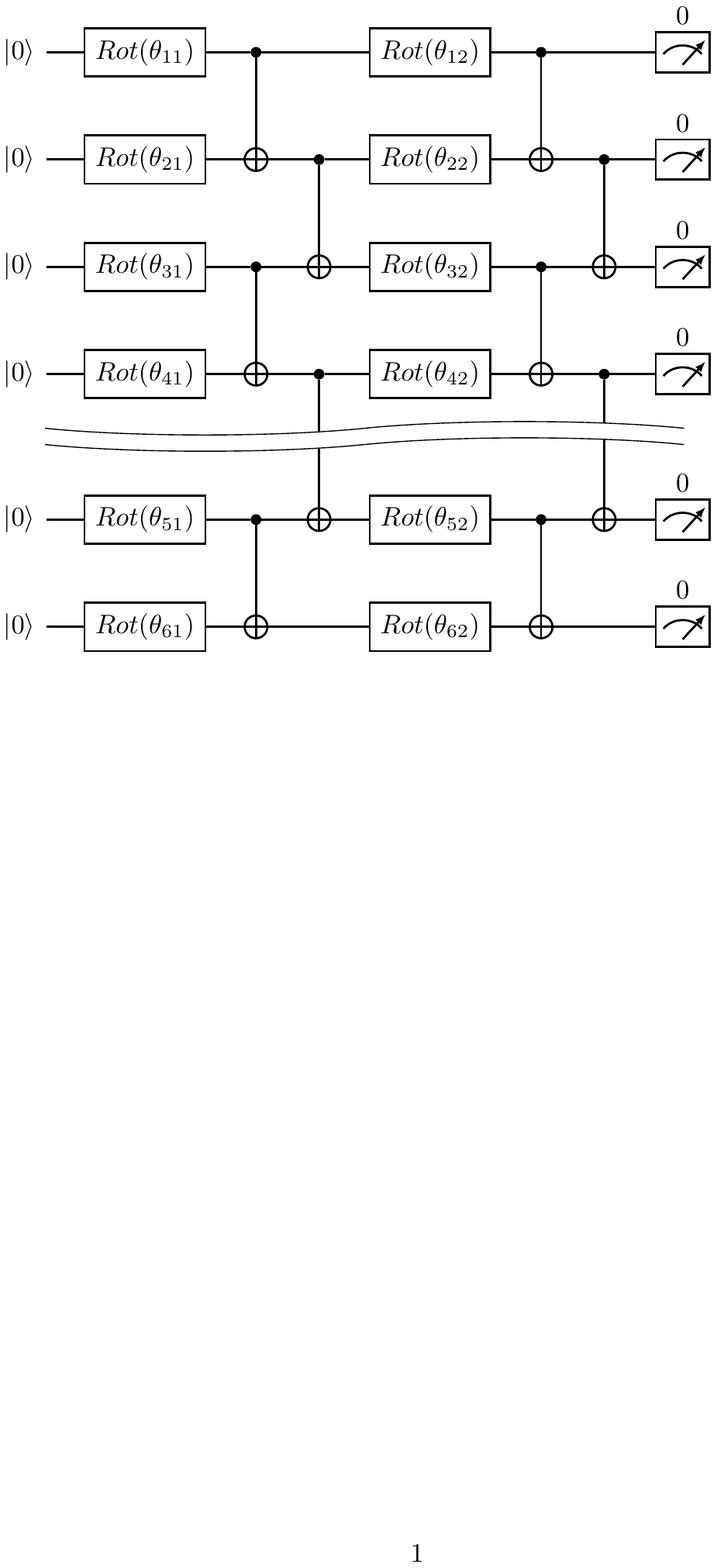}
\caption{\small{\textbf{Layout of hardware-efficient ansatz.} The gate `\textit{Rot}' represents the concatenation of rotation gate $R_z$, $R_y$, $R_z$, and $\theta$ represents the rotation angle.}}
\label{fig:hea}
\end{figure}

\begin{figure}[htp]
\centering
\includegraphics[width=0.48\textwidth]{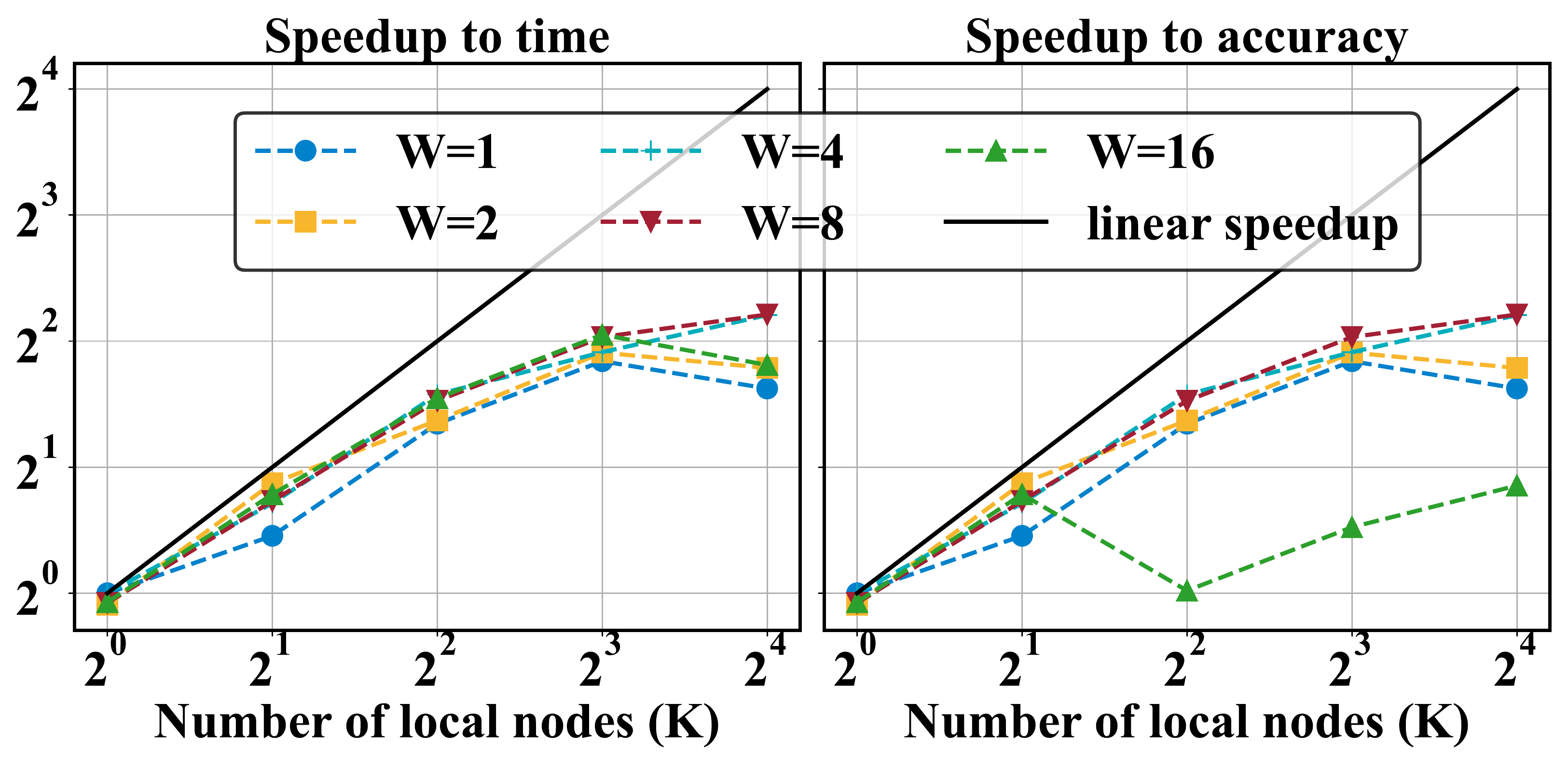}
\caption{\small{\textbf{Speedup of Shuffle-QUDIO to VQE for LiH.} The label `$W=a$' refers that the number of local iterations is $a$. The label `\textit{linear speedup}' represents the reference line of the linear speedup.}}
\label{fig:speedup2}
\end{figure}

\begin{figure*}[htp]
\captionsetup[subfigure]{justification=centering}
\centering
\includegraphics[width=0.9\textwidth]{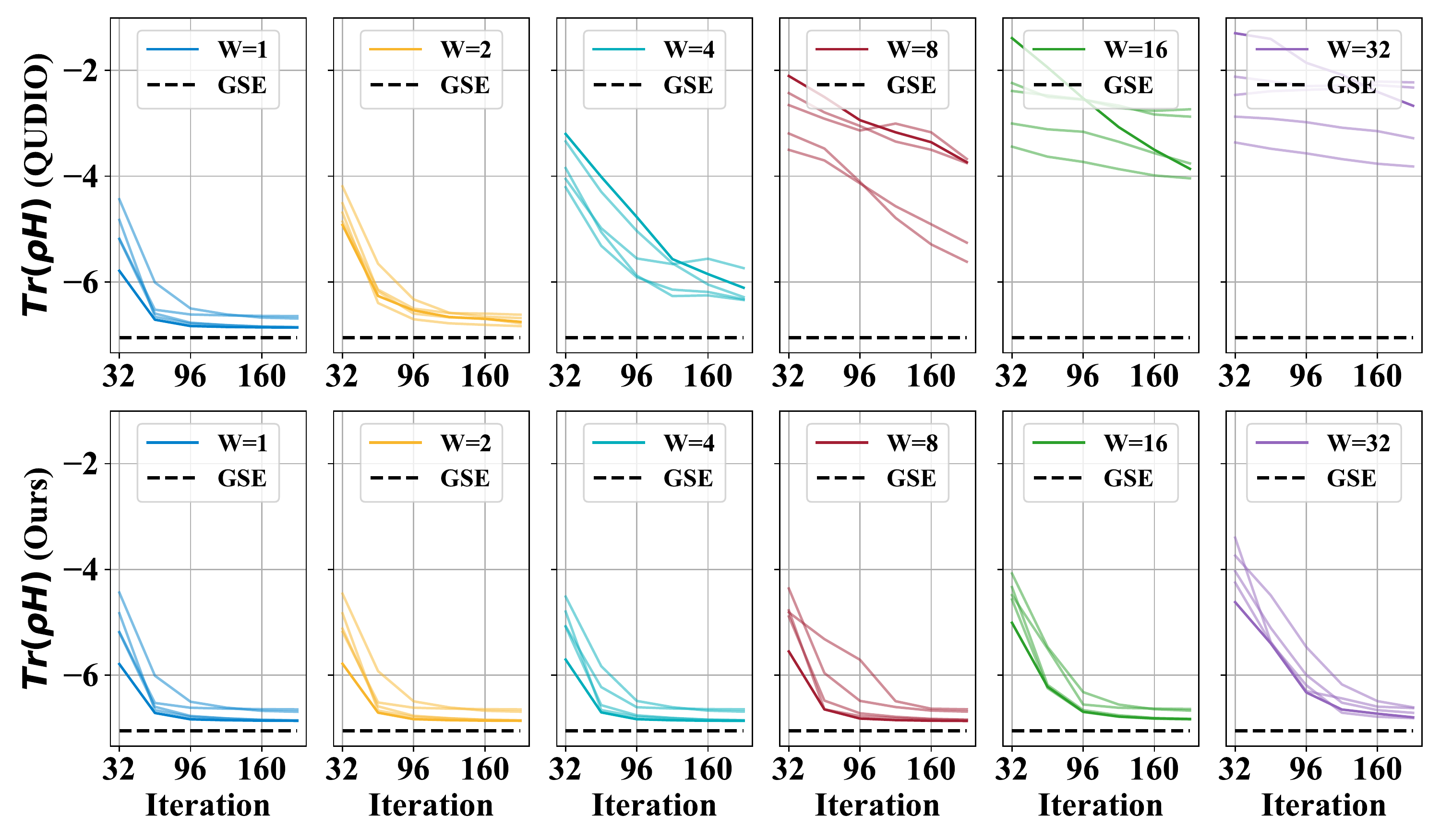}
\caption{\small{\textbf{Training process of VQE optimized by QUDIO and Shuffle-QUDIO respectively.} Each data point is collected after the synchronization. The dashed black line denotes the exact ground state energy (GSE) at the same setting.   \textit{The first row:} the loss curve with respect to iterations in QUDIO. With exponentially increasing $W$, the convergence of training is severely degraded, as depicted in subplot at first row, sixth column. \textit{The second row:} the loss curve with respect to the iterations in Shuffle-QUDIO. The speed of loss decrease sees a relatively slow decay with $W$ growing. When $W=32$ (second row, sixth column), the loss still converges to the same level of $W=1$ within $200$ iterations.}}
\label{fig:loss}
\end{figure*}

\subsection{Acceleration ratio}

We first explore the speedup of Shuffle-QUDIO. Specifically, under the setting of $K$ quantum processors and $W$ local iterations, we record the time spent per iteration as $t_1^{K,W}$ and the time spent training the model to reach a specific precision as $t_2^{K,W}$.  The speedup to time $s_1^{K,W}$ and the speedup to accuracy $s_2^{K,W}$ are defined as $s_1^{K,W}={t_1^{K,W}}/{t_1^{1,1}}$ and $s_2^{K,W}={t_2^{K,W}}/{t_2^{1,1}}$, respectively. Fig.~\ref{fig:speedup2} demonstrates the results when solving the ground state of LiH. As shown in the left panel of Fig.~\ref{fig:speedup2}, with growing the number of the distributed quantum processors, the metric $s_1$ witnesses a linear growth (from $K=1$ to $K=4$, $s_1$ reaches $3$ from $1$) and then gradually trends gently (from $K=4$ to $K=16$, $s_1$ reaches around $4$ from $3$) because of the communication bottleneck for a large number of nodes. To alleviate the communication bottleneck, we can increase $W$ to reduce the communication frequency and hence further improve the metric $s_1$. However, an overwhelmingly larger $W$ may lead to a poor convergence and then deteriorate the speedup to accuracy $s_2$. As indicated by the right panel of Fig.~\ref{fig:speedup2}, when $W\geq 16$, the speedup to accuracy $s_2$ suffers from a rapid drop, which results from the worse convergence brought by a small number of global synchronization.

\subsection{Sensitivity to communication frequency}

We compare QUDIO with Shuffle-QUDIO to show how the increased number of local iterations $W$ effects their performance under the ideal  scenario. The molecules LiH with varied inter-atomic length, i.e, $0.3\mathrm{\AA}$ to $1.9\mathrm{\AA}$ with step size $0.2\mathrm{\AA}$, are explored.  For QUDIO, the entire set of Pauli terms constituting the problem Hamiltonian is uniformly partitioned into $32$ subsets and distributed into $32$ local quantum processors. The accessible  Hamiltonian terms for each local processor remain fixed during the whole training process. The number of local iterations $W$ varies in $\{1,2,4,8,16,32\}$.

The simulation results of VQE for the molecule LiH with $0.5\mathrm{\AA}$   are illustrated in Fig.~\ref{fig:loss}. Because  the number of local iterations $W$ varies among different settings, we uniformly collect data point after every $32$ iterations (i.e., the least common multiple of all $W$) to guarantee the loss is obtained exactly after synchronization. The first row of Fig.~\ref{fig:loss} records the loss curves of QUDIO  with respect to the training steps under different local iterations $W$.  QUDIO experiences a severe drop of performance, and an evident gap between the estimated and the exact results appears when $W\geq 8$. By contrast, as shown in the bottom row of Fig.~\ref{fig:loss}, Shuffle-QUDIO well estimates the exact ground energy even when $W=32$. Comparing the subplots of the same column, Shuffle-QUDIO shows a distinct advantage in improving the convergence of the distributed VQE when requiring a lower communication overhead. For example, Shuffle-QUDIO achieves $-6.8Ha$ at the $96$-th iteration with $W=8$, while QUDIO only reaches $-4.1Ha$.

\begin{figure*}[htp]
\captionsetup[subfigure]{justification=centering}
\centering
\includegraphics[width=0.9\textwidth]{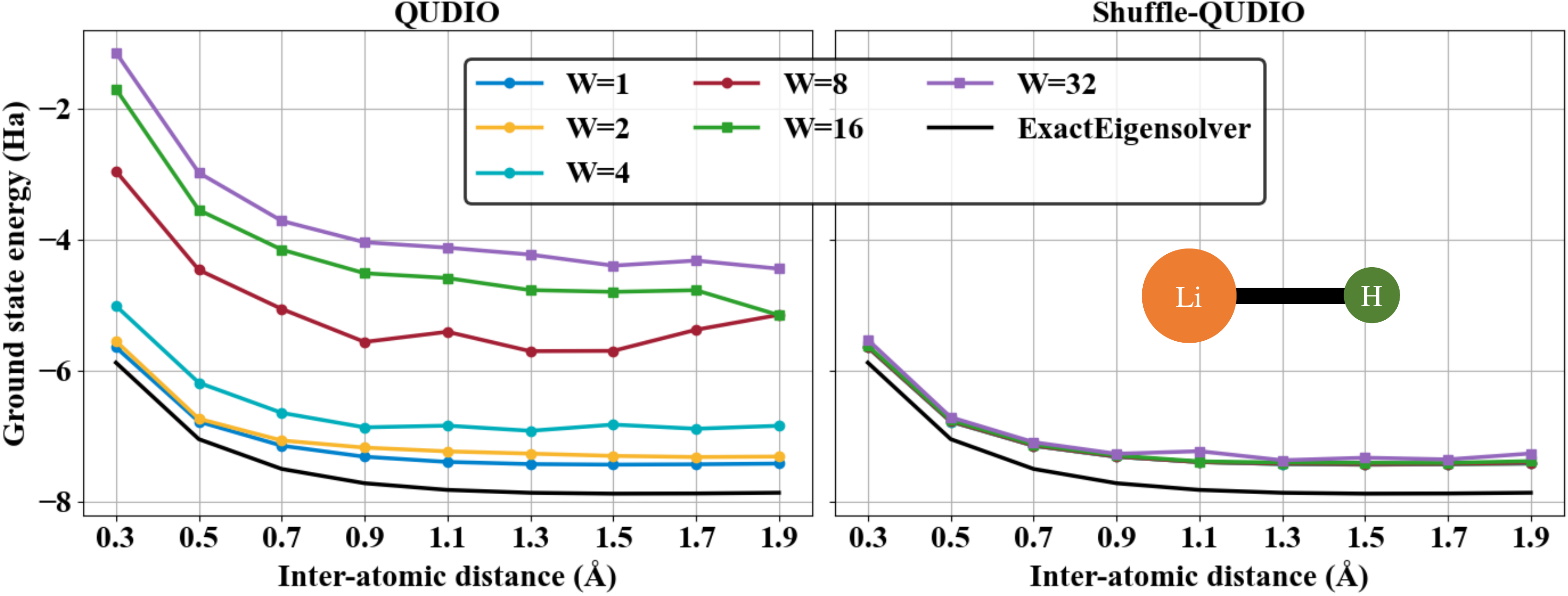}
\caption{\small{\textbf{Energy potential surface of molecule LiH.} The black line with label `\textit{ExactEigensolver}' represents the exact energy potential surface of the molecule LiH.}}
\label{fig:energy_surface}
\end{figure*}

The potential energy surface of LiH solved by  the conventional VQE is shown in Fig.~\ref{fig:energy_surface}, where the left panel describes the results of QUDIO with the varied number of local iterations $W$ and the right panel records the results of Shuffle-QUDIO. There exists a distinct boundary among the potential energy surfaces estimated by the different level of $W$ in QUDIO. More precisely, the estimated potential energy surface is gradually away from the exact potential energy surface (black line) with the increased $W$, which reveals the vulnerability of QUDIO when reducing the communication frequency among distributed workers. By contrast, Shuffle-QUDIO exhibits a fairly stable performance even when increasing $W$ from $1$ to $32$, drawn from the nearly coincident curves of potential energy surface at each setting of $W$. Note that the slight gap between the exact potential energy surface and the optimal estimated results originates from the restricted expressive power of the employed ansatz, which does not guarantee the prepared state definitely covers the ground state of LiH.

\begin{figure}[htp]
\captionsetup[subfigure]{justification=centering}
\centering
\subfloat[]{
\centering
\includegraphics[width=0.45\textwidth]{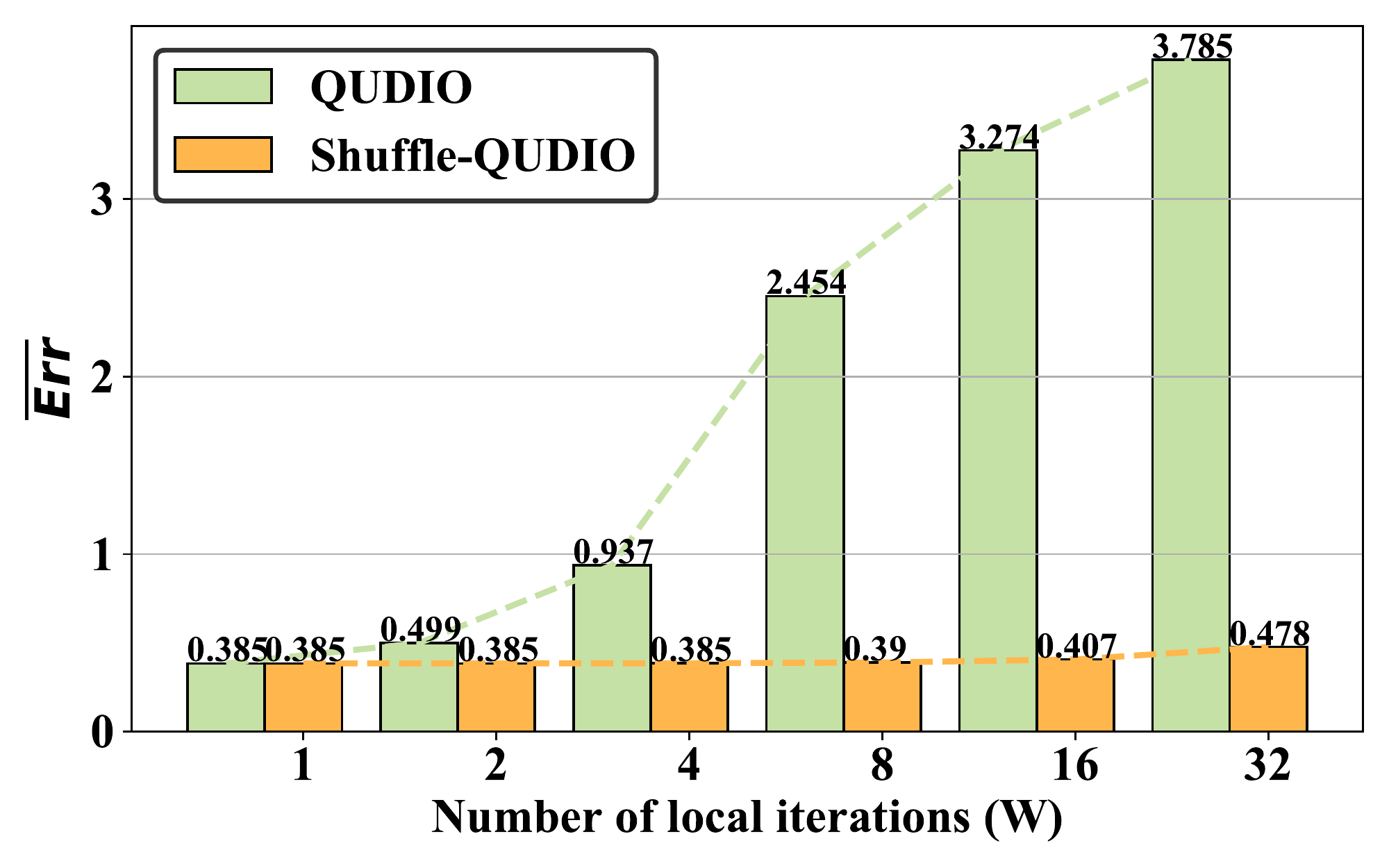}}
\vfil
\subfloat[]{
\centering
\includegraphics[width=0.45\textwidth]{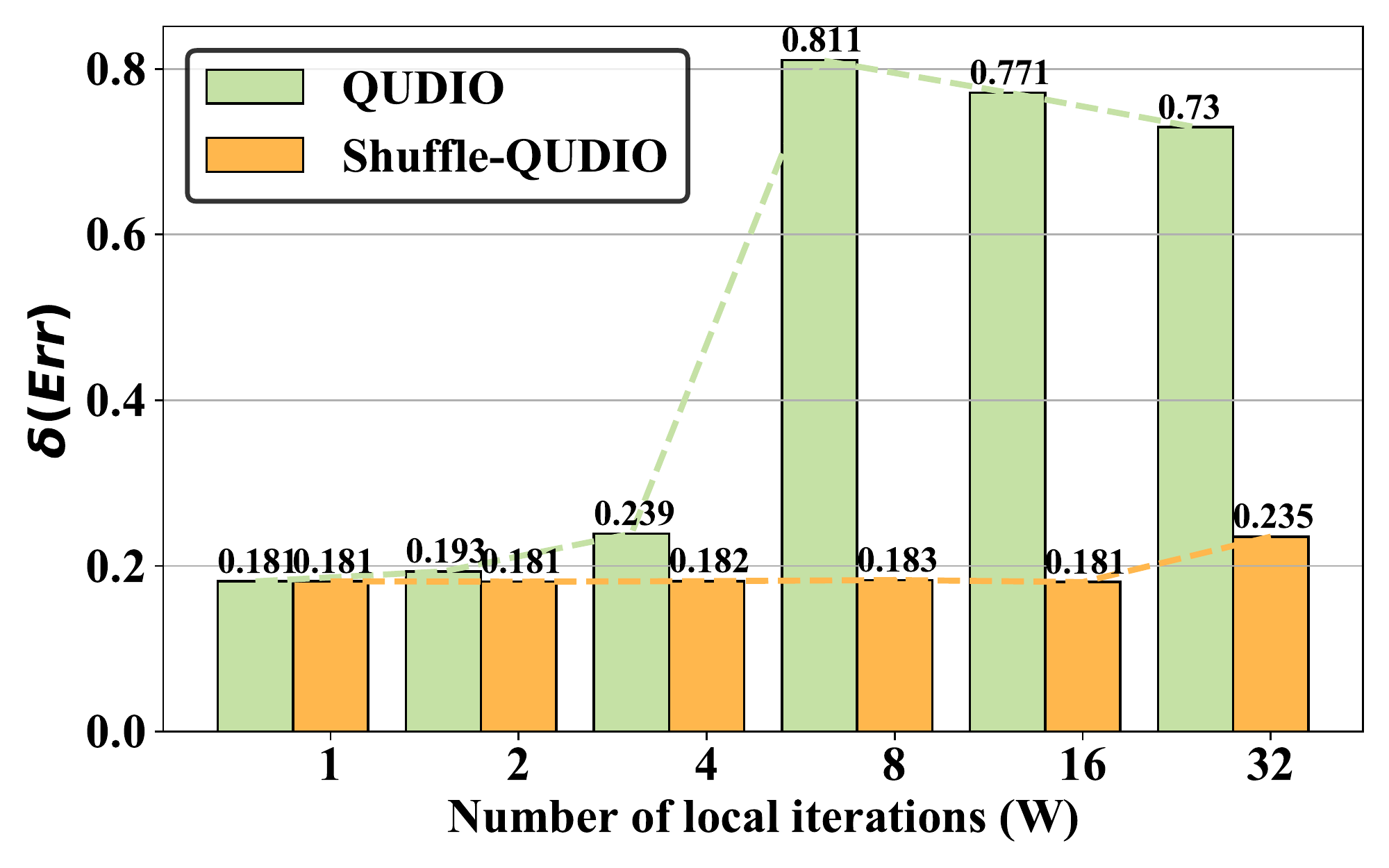}}
\caption{\small{\textbf{Mean value $\overline{Err}$ and standard deviation $\delta(Err)$ of the approximation error.} Each data point is collected over various bond distances and random seeds. Shuffle-QUDIO outperforms QUDIO in achieving smaller approximation error and lower sensitivity to communication frequency $W$.}}
\label{fig:shuffle_mean_std}
\end{figure}

To further quantify the stability of   Shuffle-QUDIO, we statistically compute the mean and standard deviation of the approximation error $Err=|E^{VQE}-E^{ideal}|$ over various bond distances and random seeds. As illustrated in the top subplot of Fig.~\ref{fig:shuffle_mean_std}, the average approximation error $\overline{Err}$ of QUDIO exponentially scales with increased $W$. When $W\geq 8$, the approximation error estimated by QUDIO exceeds $2Ha$, which fails to capture the ground state of LiH. Instead, Shuffle-QUDIO achieves an  imperceptible increment ($0.093$) of the approximation error when $W$ grows from $1$ to $32$, making it possible to largely reduce the communication overhead with a little performance drop. The bottom subplot of Fig.~\ref{fig:shuffle_mean_std} depicts the standard deviation of the approximation error derived by both methods, showing that Shuffle-QUDIO enjoys a smaller variance and a stronger stability than those of QUDIO. These observations provide the convincing empirical evidence that Shuffle-QUDIO efficiently reduces the susceptibility to $W$ in the quantum distributed optimization.

\subsection{Sensitivity to quantum noise}\label{subsec:exp-noise}
To better characterize the ability of Shuffle-QUDIO run on NISQ devices, we benchmark its performance under the depolarizing noise and the realistic quantum noise modeled by PennyLane \cite{bergholm2018pennylane}. The noise strength of the global depolarizing channel $p$ ranges from $0$ to $0.3$ with step size $0.1$. The realistic noise model is extracted from the $5$-qubit IBM \textit{ibmq\_quito} device. Note that the measurement error introduced by a finite number of shots is also considered.

\begin{figure}[htp]
\centering
\includegraphics[width=0.45\textwidth]{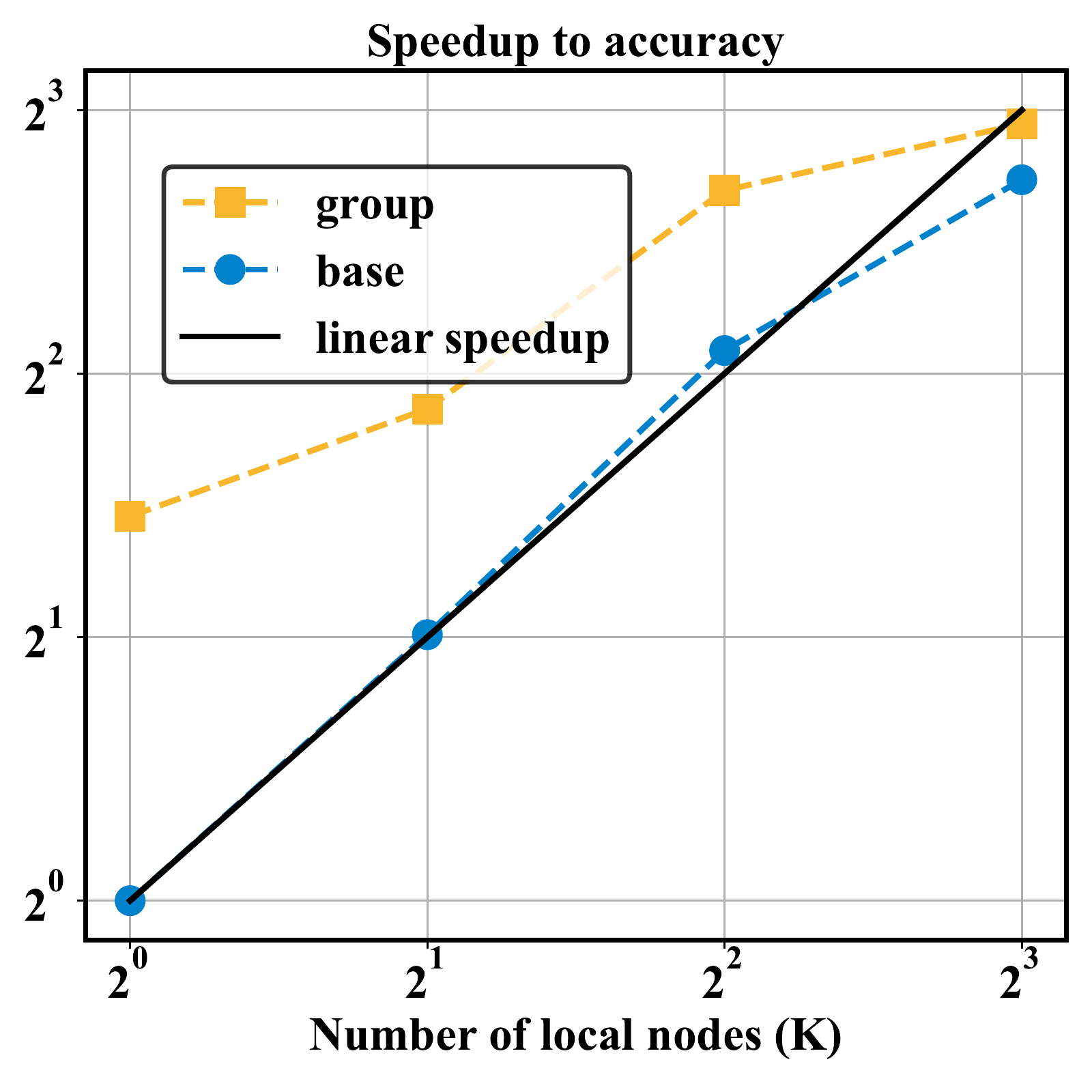}
\caption{\small{\textbf{Speedup of operator grouping to VQE for ${\rm H_2}$.} The label `\textit{base}' refers to the case that no operator grouping is applied. The label `\textit{linear speedup}' represents the reference line of linear speedup.}}
\label{fig:speedup-group}
\end{figure}

\begin{figure}[htp]
\centering
\includegraphics[width=0.45\textwidth]{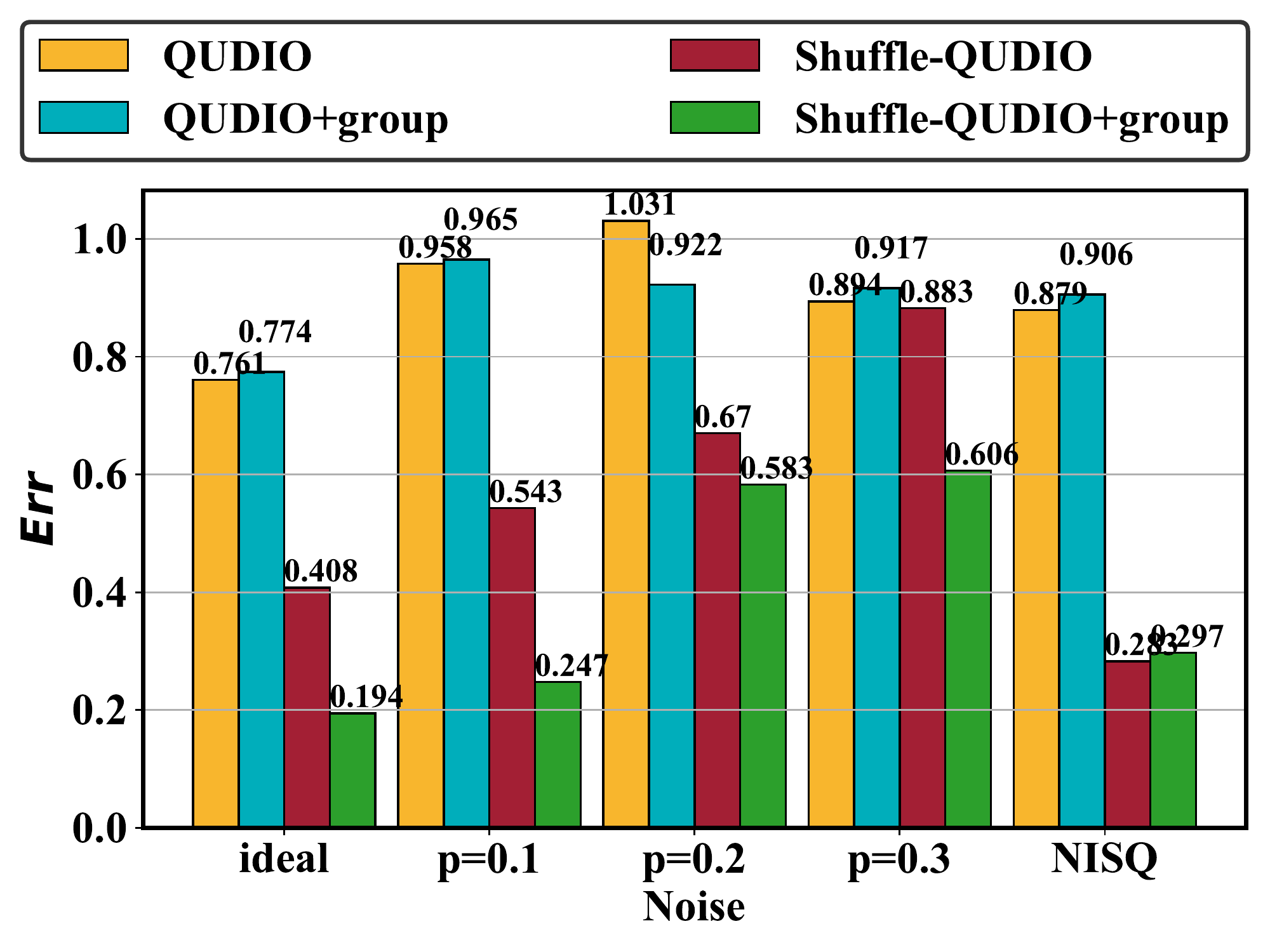}
\caption{\small{\textbf{Performance comparison in NISQ era.} \textit{ideal} represents the fault-tolerant case without noise, $p=a$ represents the case where there exists a depolarizing channel with strength $a$ in the circuit, \textit{NISQ} represents the case of running on a real NISQ device.}}
\label{fig:noise}
\end{figure}

We first benchmark the performance of Shuffle-QUDIO with the operator grouping when the shot noise is considered.  The results are shown in Fig.~\ref{fig:speedup-group}. After applying operator grouping to the molecule Hamiltonian, the trainable quantum state fast converges to the ground state of the molecule than that of the original measurement strategy. In the light of the speedup provided by the operator grouping, we can integrate this technique into the framework of Shuffle-QUDIO to gain better performance. On the other hand, with growing number of quantum processors, the acceleration rate with the operator grouping strategy gradually decays. This phenomenon partially results from the fact that a small number of Hamiltonian terms leads to a small proportion of operators that can be grouped together. 

We next apply QUDIO, QUDIO with the operator grouping, Shuffle-QUDIO, and Shuffle-QUDIO with the operator grouping to estimate the ground energy of the ${\rm H_2}$ molecule under both the system and shot noise. For each method, the hyper-parameters are set as $K=4$, $W=32$, and the number of measurements is $100$. The simulation results are shown in Fig.~\ref{fig:noise}. When the depolarizing noise is not big enough ($p<=0.2$), Shuffle-QUDIO achieves much smaller approximation error than   QUDIO. When $p=0.3$, it appears that the overwhelming noise disables both Shuffle-QUDIO and QUDIO. Under the realistic noise setting, Shuffle-QUDIO still works well with a tolerable approximation error $0.063$. By contrast, QUDIO is incapable of estimating the accurate ground state energy. Moreover, the operator grouping can further widen  the performance gap between QUDIO and Shuffle-QUDIO, by inhibiting the negative effect of quantum noise on Shuffle-QUDIO.

\subsection{Aggregation strategy}\label{sec:aggre}

Shuffle-QUDIO narrows the discrepancy among distributed processors by randomly changing the observables of each processor in every local iteration, which partially guarantee the rationality of taking average of all models for synchronization. To further explore the effect of various aggregation strategy on the performance of Shuffle-QUDIO, we devise three additional model aggregation algorithms, named as random aggregation, median aggregation and weighted aggregation.\begin{itemize}
    \item \textit{Random aggregation:} randomly select a local processor and distribute its parameters of quantum circuit to other processors.
    \item \textit{Median aggregation:} rank all local processors by their loss value and select the median as the synchronized quantum circuit.
    \item \textit{Weighted aggregation:} combine all quantum circuits of local processors by loss-induced weighted summation. The smaller the value of the loss function for a local processor, the bigger contributions the processor makes to the synchronized quantum model.
\end{itemize}
Refer to Appendix \ref{app:aggre} for more details.

\begin{figure}[htp]
\captionsetup[subfigure]{justification=centering}
\centering
\subfloat[]{
\centering
\includegraphics[width=0.45\textwidth]{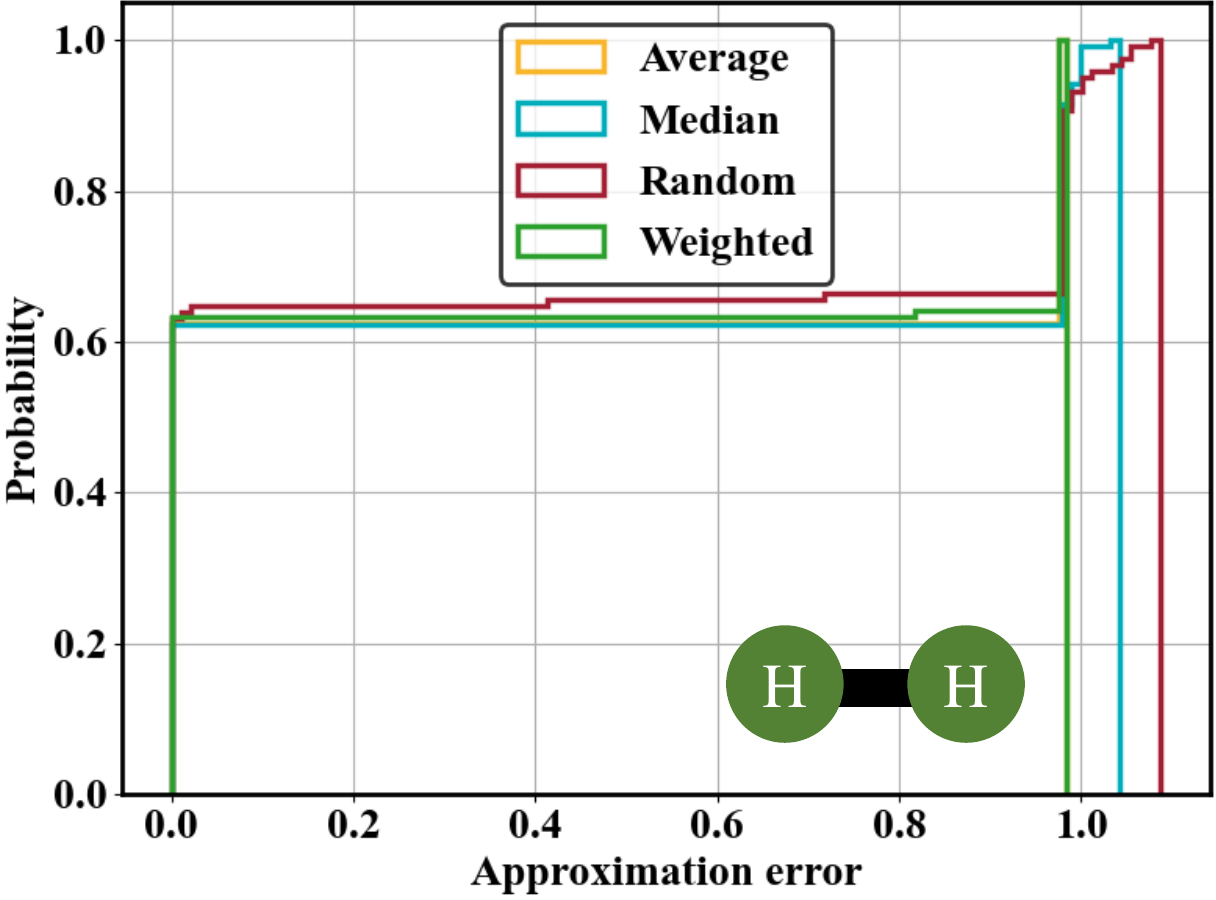}}
\vfil
\subfloat[]{
\centering
\includegraphics[width=0.45\textwidth]{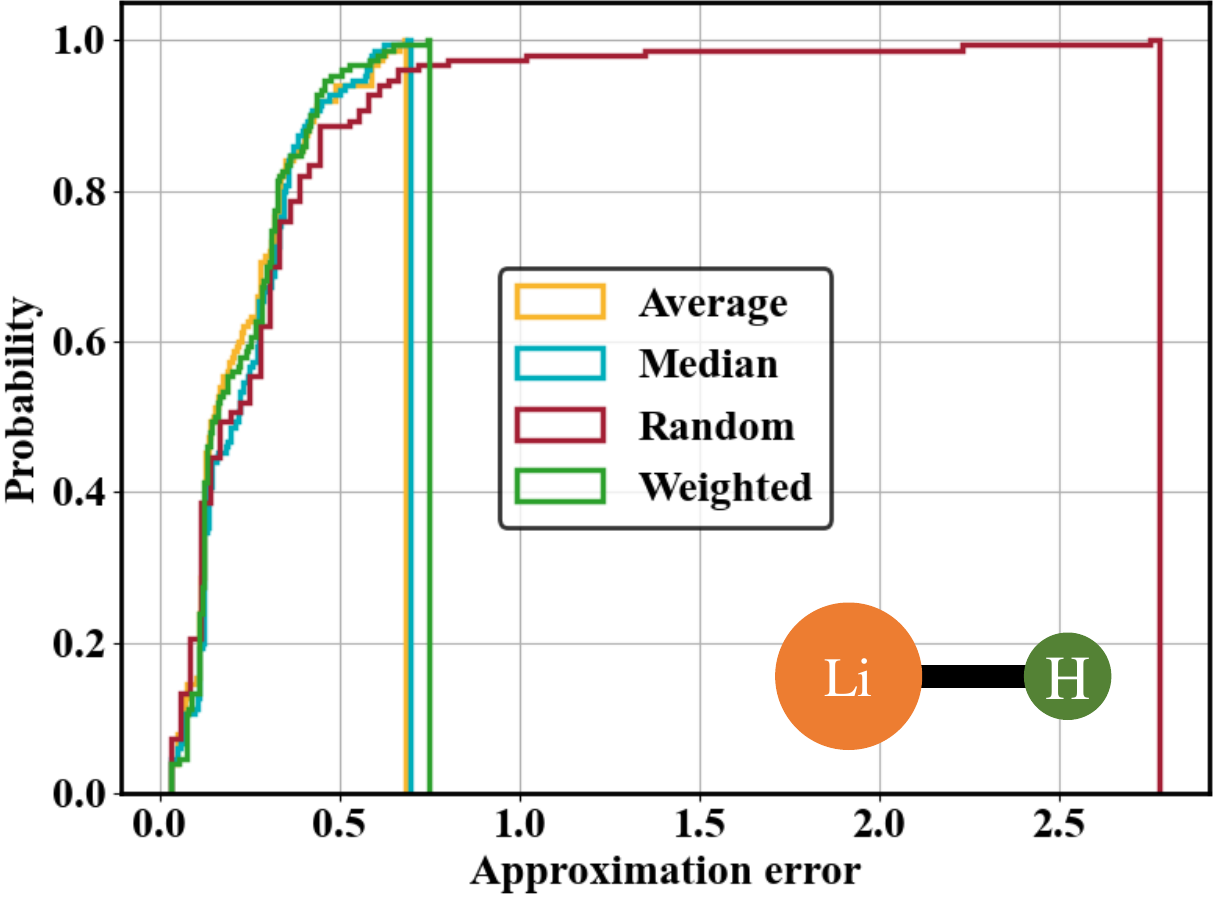}}
\caption{\small{\textbf{Performance comparison of various model aggregation algorithms. The closer the curve is to the upper left, the more accurate the estimated ground state energy is.}}}
\label{fig:aggre}
\end{figure}

We implement four quantum model aggregation methods in the framework of Shuffle-QUDIO to solve the ground state energy of molecule ${\rm H_2}$ and ${\rm LiH}$. The hyper-parameters are set as $W\in \{1,2,4,8,16,32\}$, $K\in \{1,2,4,8,16\}$. Fig.~\ref{fig:aggre} demonstrates the cumulative distribution function (CDF) of the approximation error to ground state energy. It appears no large cleavage of the approximation error among four aggregation strategies, indicating the strong robustness of Shuffle-QUDIO to quantum model aggregation. This stability may give credit to the introduction of the shuffle operation during distributed quantum computation, which diminishes the bias among different local quantum models. On the other hand, it is worth noting that average aggregation always achieves smaller approximation error with higher probability than random aggregation in the statistical sense.  This difference of CDF implies that a superior aggregation algorithm for the quantum distributed optimization could further enhance the efficiency of Shuffle-QUDIO. We leave the design of an optimal aggregation method as the future work.

\section{Discussion}\label{sec:dis}

In this paper, we propose Shuffle-QUDIO, as a novel distributed optimization scheme for VQE with faster convergence and strong noise robustness. By introducing the shuffle operation into each iteration, the Hamiltonian terms received by each local processor are not fixed during the optimization. From the statistical view, the shuffle operation warrants that the gradients manipulated by all local processors are unbiased. In this way, Shuffle-QUDIO allows an improved convergence and a lower sensitivity to communication frequency as well as quantum noise. Meanwhile, the operator group strategy can be seamlessly embedded into Shuffle-QUDIO to reduce the number of measurements in each iteration. Theoretical analysis and extensive numerical experiments on VQE further verify the effectiveness and advantages in accelerating VQE and guaranteeing small approximation errors in both ideal and noisy scenarios.

Although the random shuffle operation performs well on the $H_2$ and LiH molecules, the performance can be further improved by developing more advanced shuffling strategies. First, instead of random shuffle, we can design a problem-specific and hardware-oriented Hamiltonian allocation tactic, which can  eliminate the deviation of the optimization path of local models and better adapt to the limited quantum resources of various local processors. Second, due to the existence of barren plateau \cite{mcclean2018barren,marrero2021entanglement,wang2021noise,cerezo2021cost,arrasmith2021effect} in the optimization of ansatz, the training of local quantum models may get stuck. Inspired by the study that local observables enjoy a polynomially vanishing gradient \cite{cerezo2021cost}, a promising direction is to group Hamiltonian terms with similar locality in QUDIO to avoid the barren plateau of some processors. Finally, a more fine-grained partition of the quantum circuit structure besides observables can be employed to reduce the number of parameters to be optimized for each local processor, as implemented in \cite{zhang2022variational}.

Another feasible attempt to enhance the performance of distributed VQE in practice is to unify Shuffle-QUDIO with other measurement reduction techniques. One successful example is operator grouping, as discussed in Section \ref{subsec:exp-noise}. Specifically, when optimizing the circuit run on each distributed quantum processor, we can utilize the operator grouping strategy to reduce the required number of measurements of the allocated Hamiltonian. In this way, the measurement noise in the framework of Shuffle-QUDIO is eliminated under finite budget of shot number. Other two methods, like shot allocation and classical shadows, can be also integrated into Shuffle-QUDIO in the similar manner.

Besides the potential improvements in convergence and speedup for Shuffle-QUDIO, the data privacy leakage during transmitting gradient information among local nodes should be avoided. One the one hand, the shuffle operation in Shuffle-QUDIO naturally adds randomness to the system, hindering the recovery of intact data. On the other hand, previous studies proposed differential privacy \cite{du2020quantum,du2021quantum} and blind quantum computing \cite{li2021quantum} to protect data security. When combining these techniques and Shuffle-QUDIO, it remains open to explore the consequent influence on the convergence of optimization.

Apart from utilizing the quantum-specific properties to enhance Shuffle-QUDIO, we can also leverage the experience from classical distributed optimization, such as Elastic Averaging SGD \cite{zhang2015deep}, decentralized SGD \cite{koloskova2020unified}. It is worth noting that the flexibility of Shuffle-QUDIO makes it easy to replace some components with advanced classical techniques, as discussed in Sec.~\ref{sec:aggre}. Taken together, it is expected to utilize Shuffle-QUDIO and its variants to speed up the computation of variational quantum circuits and tackle real-world problems with NISQ devices.

\medskip 
\textbf{Code availability.} The source code of Shuffle-QUDIO open-source is available at \url{https://github.com/QQQYang/Shuffle-QUDIO}.

%\bibliographystyle{unsrt}
%\bibliography{reference.bib}

\newpage   
\clearpage 
\medskip

\newpage   
\clearpage 
\appendix 
\onecolumngrid

The appendix is organized as follows. Appendix \ref{app:notation} introduces basic notations and properties of the loss function. Appendices \ref{pr:shuffle}, \ref{pr:grad-bias}, and \ref{app:proof1} present the proofs  of Lemma \ref{the:shuffle}, Lemma \ref{lemma:grad-bias}, and Theorem \ref{the:convergence}, respectively. Appendix \ref{app:aggre} explains the aggregation methods discussed in Section \ref{sec:aggre}. Appendix \ref{app:traj} demonstrates the  additional experiment results and analysis about Shuffle-QUDDIO.

\section{Notations and properties of the loss function}\label{app:notation}

\subsection{Notations}

The notations are unified as follows. We denote $W$ as the number of local iterations between the global synchronization, $K$ as the number of quantum processors, and $T$ as the number of total iterations. During the optimization, we denote $\bm{\theta}$ as the collection of trainable parameters, $\bm{\theta}_k^t$ as the parameters of the $k$-th processor at the $t$ iteration, $\eta$ as the learning rate, $L$ as the loss function, and $\bm{g}_k^t$ ($\overline{\bm{g}}_k^t$) as the exact (estimated) gradient for the $k$-th quantum processor at the $t$-th iteration.

\subsection{Some properties of the loss function}

As explained in the main text, the loss function to be minimized in VQE is  
\begin{equation}\label{equ:lf}
    L(\bm{\theta})=\Tr(H\bm{\rho}(\bm{\theta})),
\end{equation}
where $H$ is the problem Hamiltonian and $\bm{\rho}(\bm{\theta})$ is the density matrix of the prepared quantum system parameterized by $\bm{\theta}$. Without loss of generality, the problem Hamiltonian $H$ is expressed as a weighted summation of Pauli operators $H=\sum_{i=1}^M\alpha_iH_i\in \mathbb{C}^{2^n\times 2^n}$, where $H_i\in\{\sigma_X, \sigma_Y, \sigma_Z, \sigma_I\}^{\otimes n}$. The properties of the loss function $L$ are summarized in following four lemmas, which quantify the bounded gradient and Lipschitz continuity and will be employed in the subsequent context.

\begin{lemma}[Bounded gradient norm of the loss function]\label{lemma:bg}  
The norm of the gradient of loss function $L$ with respect to parameter $\bm{\theta}$ is bounded by a constant $\left\|\frac{\partial L(\bm{\theta})}{\partial \bm{\theta}} \right\|\leq G$, where $G=P\sum_{i=1}^M|\alpha_i|$, and $P$ is the dimension of the parameters $\bm{\theta}$.
\end{lemma}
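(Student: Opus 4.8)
The plan is to reduce the whole claim to two elementary facts: each Pauli string $H_i$ has unit operator norm, and the parameter-shift rule expresses every partial derivative as a difference of two loss evaluations. First I would bound the loss itself at an arbitrary parameter point. Since $\rho(\bm{\theta})$ is a density matrix (positive semidefinite with unit trace) and each $H_i\in\{\sigma_X,\sigma_Y,\sigma_Z,\sigma_I\}^{\otimes n}$ has all eigenvalues in $\{\pm 1\}$, hence $\|H_i\|_\infty=1$, the trace version of H\"older's inequality gives $|\Tr(H_i\rho(\bm{\theta}))|\le \|H_i\|_\infty \Tr(\rho(\bm{\theta}))=1$. Summing over the $M$ terms and using the triangle inequality then yields the uniform bound $|L(\bm{\theta})|=|\sum_{i=1}^M\alpha_i\Tr(H_i\rho(\bm{\theta}))|\le \sum_{i=1}^M|\alpha_i|$, valid for every $\bm{\theta}$.

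Second, I would control each coordinate of the gradient through the parameter-shift rule stated in the main text, namely $\partial L/\partial\theta_j=\tfrac12[L(\bm{\theta}+\tfrac{\pi}{2}\bm{e}_j)-L(\bm{\theta}-\tfrac{\pi}{2}\bm{e}_j)]$. Because both loss evaluations obey the bound from the previous step, their difference is at most $2\sum_i|\alpha_i|$ in absolute value, so $|\partial L/\partial\theta_j|\le \sum_{i=1}^M|\alpha_i|$ for every $j\in[P]$.

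Finally I would assemble the coordinatewise estimates into a bound on the Euclidean norm. Using $\|v\|_2\le\|v\|_1=\sum_{j=1}^P|v_j|$ together with the per-coordinate bound gives $\|\partial L/\partial\bm{\theta}\|\le \sum_{j=1}^P|\partial L/\partial\theta_j|\le P\sum_{i=1}^M|\alpha_i|=G$, which is exactly the claim.

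I expect the only delicate point to be the bookkeeping around which norm inequality produces the advertised factor of $P$: the tighter estimate $\|v\|_2\le\sqrt{P}\,\max_j|v_j|$ would already give $\sqrt{P}\sum_i|\alpha_i|$, so it is the looser $\ell_1$ bound that matches the stated constant $G=P\sum_i|\alpha_i|$. A secondary subtlety is ensuring the parameter-shift identity is applied in a regime where it is exact; if one prefers not to invoke it, the same constant follows by differentiating $\rho(\bm{\theta})$ directly through the ansatz, where each $\partial_{\theta_j}\rho$ is a commutator with a bounded generator, but the parameter-shift route keeps the argument self-contained given the main text.
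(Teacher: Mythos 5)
Your proposal is correct and follows essentially the same route as the paper's proof: bound $|L(\bm{\theta})|\le\sum_{i=1}^M|\alpha_i|$ using the unit operator norm of Pauli strings, convert this to a per-coordinate gradient bound via the parameter-shift rule, and assemble the $P$ coordinates into the stated constant $G=P\sum_{i=1}^M|\alpha_i|$. Your closing remark is also apt: the paper's constant indeed corresponds to the loose $\ell_1$-type assembly, and the sharper $\ell_2$ estimate $\sqrt{P}\sum_{i=1}^M|\alpha_i|$ would suffice, a refinement the paper does not pursue.
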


\begin{proof}[Proof of Lemma \ref{lemma:bg}]
    For the $i$-th parameter $\bm{\theta}_i$, we can obtain the exact gradient by the parameter-shift rule 
\begin{equation}\label{eqn:ps}
    \frac{\partial L(\bm{\theta})}{\partial \bm{\theta}_i}=\frac{1}{2}(L(\bm{\theta}+\frac{\pi}{2}\bm{e}_i)-L(\bm{\theta}-\frac{\pi}{2}\bm{e}_i)),
\end{equation}
where $\bm{e}_i$ is an indicator vector for the $i$-th element. Recall that
\begin{equation}\label{eqn:loss_bound}
    L(\bm{\theta})=\Tr(\rho(\bm{\theta})H)=\sum_{i=1}^M\alpha_i\Tr(\rho(\bm{\theta})H_i)\leq \sum_{i=1}^M|\alpha_i|,
\end{equation}
where the inequality holds because $|\Tr(\rho(\bm{\theta})H_i)|\leq 1$ when $H_i\in\{\sigma_X, \sigma_Y, \sigma_Z, \sigma_I\}^{\otimes n}$. The relation $\left\|\frac{\partial L(\bm{\theta})}{\partial \bm{\theta}} \right\|\leq G$ with $G=P\sum_{i=1}^M|\alpha_i|$ can be achieved by substituting Eq.~(\ref{eqn:loss_bound}) into Eq.~(\ref{eqn:ps}).
\end{proof}

\begin{lemma}[$F_1$-Lipschitz continuity of the loss function \cite{sweke2020stochastic}]\label{lemma:F1}
The loss function $L(\bm{\theta})=\Tr(\rho(\bm{\theta})H)$ is $F_1$-Lipschitz continuous $\left|L(\bm{\theta})-L(\bm{\beta})\right|\leq F_1\left\|\bm{\theta}-\bm{\beta}\right\|$ with $F_1=G$.
\end{lemma}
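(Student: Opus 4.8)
The plan is to obtain the Lipschitz estimate directly from the gradient-norm bound already established in Lemma~\ref{lemma:bg}, invoking the standard principle that a globally bounded gradient forces global Lipschitz continuity. First I would record that $L(\bm\theta)=\Tr(\rho(\bm\theta)H)$ is smooth in $\bm\theta$: since the ansatz $U(\bm\theta)$ in Eq.~(\ref{eqn:circuit}) is assembled from gates of the form $\exp(-i\theta_{l,i}O_i)$, each expectation $\Tr(\rho(\bm\theta)H_i)$ is a finite trigonometric polynomial in the entries of $\bm\theta$, hence $C^\infty$ with a gradient defined everywhere. This differentiability is precisely what licenses the integral representation used below.

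Next I would express the difference $L(\bm\theta)-L(\bm\beta)$ as a line integral of the gradient along the segment joining $\bm\beta$ and $\bm\theta$. Setting $\gamma(s)=\bm\beta+s(\bm\theta-\bm\beta)$ for $s\in[0,1]$ and applying the fundamental theorem of calculus to $s\mapsto L(\gamma(s))$ gives
\begin{equation}
    L(\bm\theta)-L(\bm\beta)=\int_0^1 \left\langle \nabla L(\gamma(s)),\,\bm\theta-\bm\beta\right\rangle\,ds.
\end{equation}
Taking absolute values and applying the Cauchy--Schwarz inequality inside the integral bounds the integrand by $\left\|\nabla L(\gamma(s))\right\|\,\left\|\bm\theta-\bm\beta\right\|$.

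The final step is to substitute the uniform bound $\left\|\nabla L(\gamma(s))\right\|\leq G$ from Lemma~\ref{lemma:bg}, which holds at every point $\gamma(s)$ and is independent of $s$. Pulling the constant factor $G\,\left\|\bm\theta-\bm\beta\right\|$ out of the integral and evaluating $\int_0^1 ds=1$ yields $\left|L(\bm\theta)-L(\bm\beta)\right|\leq G\left\|\bm\theta-\bm\beta\right\|$, which is exactly $F_1$-Lipschitz continuity with $F_1=G$.

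I expect this argument to be essentially routine, since the substantive work — the bound $\left\|\nabla L\right\|\leq G$ — is already carried out in Lemma~\ref{lemma:bg}. The only point demanding mild care is justifying that the gradient bound applies uniformly along the \emph{entire} segment $\gamma(s)$ rather than at a single point; this is immediate here because Lemma~\ref{lemma:bg} supplies a global bound valid for \emph{all} $\bm\theta$, so no localization or compactness argument is needed.
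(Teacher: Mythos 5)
Your proposal is correct and follows essentially the same route as the paper: both arguments reduce the claim to the gradient bound of Lemma~\ref{lemma:bg} applied along the segment joining $\bm{\beta}$ and $\bm{\theta}$, followed by the Cauchy--Schwarz inequality; the paper invokes the mean value theorem to produce a single intermediate point $\bm{\gamma}$, whereas you integrate the gradient along the segment, which is a trivial variant of the same argument.
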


\begin{proof}[Proof of Lemma \ref{lemma:F1}]
    Recall the mean value theorem, for a differentiable loss function $L$, $\exists \bm{\gamma}\in(\bm{\theta}, \bm{\beta})$ such that
\begin{equation}
    L(\bm{\theta})-L(\bm{\beta})=\braket{\frac{\partial L(\bm{\gamma})}{\partial \bm{\gamma}},\bm{\theta}-\bm{\beta}}.
\end{equation}
Furthermore, using $\braket{x,y}\leq \left\|x\right\|\left\|y\right\|$, we have
\begin{equation}
    \begin{aligned}
        \braket{\frac{\partial L(\bm{\gamma})}{\partial \bm{\gamma}},\bm{\theta}-\bm{\beta}} \leq \left\|\frac{\partial L(\bm{\gamma})}{\partial \bm{\gamma}}\right\|\left\|\bm{\theta}-\bm{\beta}\right\|\leq G\left\|\bm{\theta}-\bm{\beta}\right\|,
    \end{aligned}
\end{equation}
where the second inequality holds based on Lemma \ref{lemma:bg}. This lead to $F_1=G$.
\end{proof}

\begin{lemma}[$F_2$-Lipschitz continuity of gradient]\label{lemma:F2}
Define a map $g:[0,2\pi)^P\rightarrow \mathbb{R}^P$ formulated as $g(\bm{\theta})=\frac{\partial L(\bm{\theta})}{\partial \bm{\theta}}$. Then $g(\bm{\theta})$ is $F_2$-Lipschitz continuous $\left\|g(\bm{\theta})-g(\bm{\beta})\right\|\leq F_2\left\|\bm{\theta}-\bm{\beta}\right\|$ with $F_2=PG$.
\end{lemma}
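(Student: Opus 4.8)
The plan is to reduce the Lipschitz continuity of the gradient map $g$ to the already-established Lipschitz continuity of the scalar loss $L$ (Lemma \ref{lemma:F1}), exploiting the fact that the parameter-shift rule writes each component of $g$ as a finite difference of loss values. First I would fix two parameter vectors $\bm{\theta},\bm{\beta}\in[0,2\pi)^P$ and work coordinatewise. For the $i$-th coordinate, Eq.~(\ref{eqn:ps}) gives
\[
g_i(\bm{\theta}) = \frac{1}{2}\left(L(\bm{\theta}+\tfrac{\pi}{2}\bm{e}_i) - L(\bm{\theta}-\tfrac{\pi}{2}\bm{e}_i)\right),
\]
and the analogous identity holds at $\bm{\beta}$.

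Next I would estimate the componentwise difference $|g_i(\bm{\theta}) - g_i(\bm{\beta})|$. Regrouping the four loss evaluations into the two pairs that share the same shift $\pm\tfrac{\pi}{2}\bm{e}_i$ and applying the triangle inequality, each pair reduces to a term of the form $|L(\bm{\theta}\pm\tfrac{\pi}{2}\bm{e}_i) - L(\bm{\beta}\pm\tfrac{\pi}{2}\bm{e}_i)|$. Since translating both arguments by the common vector $\pm\tfrac{\pi}{2}\bm{e}_i$ leaves their Euclidean distance unchanged, Lemma \ref{lemma:F1} bounds each such term by $G\|\bm{\theta}-\bm{\beta}\|$. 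The prefactor $\tfrac{1}{2}$ then yields $|g_i(\bm{\theta}) - g_i(\bm{\beta})| \leq G\|\bm{\theta}-\bm{\beta}\|$ for every coordinate $i$.

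Finally I would assemble the full vector estimate. Applying the elementary inequality $\|v\| \leq \sum_{i=1}^P |v_i|$ to $v = g(\bm{\theta}) - g(\bm{\beta})$ and inserting the $P$ coordinatewise bounds gives
\[
\|g(\bm{\theta}) - g(\bm{\beta})\| \leq \sum_{i=1}^P |g_i(\bm{\theta}) - g_i(\bm{\beta})| \leq PG\,\|\bm{\theta}-\bm{\beta}\|,
\]
which is exactly the claimed constant $F_2 = PG$. I do not anticipate a genuine obstacle here; the one point requiring attention is the choice of norm inequality. Aggregating the coordinatewise bounds in the Euclidean (root-sum-of-squares) sense would produce the sharper constant $\sqrt{P}\,G$, so the stated $PG$ is the looser but simpler value obtained by routing through the $\ell_1$ norm. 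I would flag this explicitly so the reader understands that the Lipschitz constant is not claimed to be tight.
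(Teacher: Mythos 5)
Your proposal is correct and follows essentially the same route as the paper's proof: express each gradient component via the parameter-shift rule, regroup the four loss evaluations by common shift, apply the triangle inequality together with Lemma \ref{lemma:F1}, and aggregate the $P$ coordinatewise bounds to obtain $F_2 = PG$. Your two refinements --- making the $\ell_1$-to-$\ell_2$ aggregation step explicit (the paper leaves it implicit) and observing that Euclidean aggregation would yield the sharper constant $\sqrt{P}\,G$ --- are both valid and go slightly beyond what the paper records, but they do not change the argument's structure.
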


\begin{proof}[Proof of Lemma \ref{lemma:F2}]
    Combining Eq.~\ref{eqn:ps} and Lemma \ref{lemma:F1}, we have
\begin{equation}
    \begin{aligned}
        |g(\bm{\theta}_i)-g(\bm{\beta}_i)|&=\frac{1}{2}|L(\bm{\theta}+\frac{\pi}{2}\bm{e}_i)-L(\bm{\theta}-\frac{\pi}{2}\bm{e}_i)-L(\bm{\beta}+\frac{\pi}{2}\bm{e}_i)+L(\bm{\beta}-\frac{\pi}{2}\bm{e}_i)|\\
        &\leq \frac{1}{2}(|L(\bm{\theta}+\frac{\pi}{2}\bm{e}_i)-L(\bm{\beta}+\frac{\pi}{2}\bm{e}_i)|+|L(\bm{\beta}-\frac{\pi}{2}\bm{e}_i)-L(\bm{\theta}-\frac{\pi}{2}\bm{e}_i)|)\\
        &\leq F_1|\bm{\theta}-\bm{\beta}|,
    \end{aligned}
\end{equation}
where the first inequality follows the triangle inequality $|x+y|\leq|x|+|y|$ and the second inequality is directly derived from Lemma \ref{lemma:F1}. Therefore, for a quantum state controlled by $P$ parameters, we have $\left\|g(\bm{\theta})-g(\bm{\beta})\right\|\leq F_2\left\|\bm{\theta}-\bm{\beta}\right\|$ with $F_2=PF_1$.
\end{proof}

\begin{lemma}[Gradient of the noisy loss function]\label{lemma:dn}
Let $\mathcal{N}_p$ be the global depolarizing channel with the strength $p$. For a quantum state $\rho(\bm{\theta})$ parameterized by $\bm{\theta}$, the gradient after applying the depolarizing channel is \[\frac{\partial L(\mathcal{N}_p(\rho(\bm{\theta})))}{\partial \theta_i}=(1-p)\frac{\partial L(\rho(\bm{\theta}))}{\partial \theta_i}.\]
\end{lemma}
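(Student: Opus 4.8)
The plan is to exploit the linearity of the depolarizing channel together with the fact that it introduces only a parameter-independent additive constant into the loss. First I would substitute the definition $\mathcal{N}_p(\rho)=(1-p)\rho+p\mathbb{I}/2^n$ directly into the noisy loss $L(\mathcal{N}_p(\rho(\bm{\theta})))=\Tr(\mathcal{N}_p(\rho(\bm{\theta}))H)$ and then invoke the linearity of the trace to split the expression into two pieces:
\begin{equation}
    L(\mathcal{N}_p(\rho(\bm{\theta})))=(1-p)\Tr(\rho(\bm{\theta})H)+\frac{p}{2^n}\Tr(H).
\end{equation}
The first term is simply $(1-p)L(\rho(\bm{\theta}))$, and the second term is a fixed scalar determined entirely by the problem Hamiltonian $H$ and the system dimension $2^n$.

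The crucial observation is that this additive term $\frac{p}{2^n}\Tr(H)$ does \emph{not} depend on the variational parameters $\bm{\theta}$, since the maximally mixed state $\mathbb{I}/2^n$ carries no parameter dependence. Consequently, differentiating the decomposition above with respect to any single parameter $\theta_i$ annihilates the constant term, leaving
\begin{equation}
    \frac{\partial L(\mathcal{N}_p(\rho(\bm{\theta})))}{\partial \theta_i}=(1-p)\frac{\partial L(\rho(\bm{\theta}))}{\partial \theta_i},
\end{equation}
which is exactly the claimed identity.

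I do not anticipate any genuine obstacle here, as the statement reduces to a one-line computation once the channel is expanded; the only point warranting explicit mention is that $\Tr(H)$ is a parameter-independent constant so that its derivative vanishes. I would note in passing that this is precisely the mechanism that makes the scalar prefactor $(1-p)$ factor cleanly out of the gradient, which is what enables the subsequent convergence analysis to absorb depolarizing noise as a simple rescaling of the exact gradient rather than as a structurally different perturbation.
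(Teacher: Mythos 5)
Your proposal is correct and follows essentially the same route as the paper: both expand $\mathcal{N}_p(\rho)=(1-p)\rho+p\mathbb{I}/2^n$, use linearity of the trace to write the noisy loss as $(1-p)L(\bm{\theta})+p\Tr(H)/2^n$, and observe that the additive term is $\bm{\theta}$-independent. The only cosmetic difference is that the paper eliminates this constant by taking the difference in the parameter-shift rule, whereas you eliminate it by direct differentiation; since the parameter-shift rule yields the exact gradient here, the two steps are equivalent.
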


\begin{proof}[Proof of Lemma \ref{lemma:dn}]
    Recalling the depolarizing channel $\mathcal{N}_p(\rho)=(1-p)\rho+p\mathbb{I}/2^n$, we have the noisy loss function
\begin{equation}
    \overline{L}(\bm{\theta})=L(\mathcal{N}_p(\rho(\bm{\theta})))=\Tr(H((1-p)\rho(\bm{\theta})+p\mathbb{I}/2^n))=(1-p)\Tr(H\rho(\bm{\theta}))+p\Tr(H/2^n)=(1-p)L(\bm{\theta})+p\Tr(H/2^n).
\end{equation}
Following the parameter shift rule in Eq.~(\ref{eqn:ps}), the estimated gradient under the depolarizing noise is 
\begin{equation}
    \frac{\partial \overline{L}(\bm{\theta})}{\partial \theta_i}=\frac{1}{2}(\overline{L}(\bm{\theta}+\frac{\pi}{2}\bm{e}_i)-\overline{L}(\bm{\theta}-\frac{\pi}{2}\bm{e}_i))=\frac{1-p}{2}(L(\bm{\theta}+\frac{\pi}{2}\bm{e}_i)-L(\bm{\theta}-\frac{\pi}{2}\bm{e}_i))=(1-p)\frac{\partial L(\bm{\theta})}{\partial \theta_i}.
\end{equation}
\end{proof}

\section{Proof of Lemma \ref{the:shuffle}}\label{pr:shuffle}

For each iteration, $L_m$ is estimated by the randomly sampled $m$ terms without replacement. Therefore, we have
\begin{equation}
    \mathbb{E}_{\pi}[L_m]=\mathbb{E}_{\pi}[\Tr(\rho(\bm{\theta})\sum_{i=1}^mH_{\pi(i)})]=\sum_{i=1}^m\Tr(\rho(\bm{\theta})\mathbb{E}_{\pi}[H_{\pi(i)}])=\frac{1}{M}\sum_{i=1}^m\Tr(\rho(\bm{\theta})\sum_{j=1}^MH_j)=\frac{m}{M}L.
\end{equation}

According to the parameter-shift rule, the exact gradient of $L_m$ with respect to the $i$-th parameter $\bm{\theta}_i$ is expressed as
\begin{equation}\label{eqn:psm}
    \frac{\partial L_m(\bm{\theta})}{\partial \theta_i}=\frac{1}{2}(L_m(\bm{\theta}+\frac{\pi}{2}\bm{e}_i)-L_m(\bm{\theta}-\frac{\pi}{2}\bm{e}_i))=\frac{m}{M}\frac{\partial L(\bm{\theta})}{\partial \theta_i},
\end{equation}
where $\bm{e}_i$ is an indicator vector for the $i$-th element. Without loss of generality, each trainable parameter is assumed to be mutually independent. The expectation of gradient vector is expressed as
\begin{equation}
    \mathbb{E}[\frac{\partial L_m}{\partial \bm{\theta}}]=\frac{m}{M}\frac{\partial L}{\partial \bm{\theta}}.
\end{equation}

\section{Proof of Lemma \ref{lemma:grad-bias}}\label{pr:grad-bias}

With the condition that the norm of local gradient of each processor is bounded by $||\bm g_k(\bm\theta, H_k)||^2\leq G^2$, we have $||\nabla L(\bm\theta,H)||^2=||\sum_{k=1}^K\bm g_k(\bm\theta, H_k)||^2\leq K^2G^2$.

For QUDIO, the discrepancy between local gradient and global gradient is bounded by
\begin{equation}\label{eqn:fix-ideal}
    \begin{aligned}
        ||\nabla L(\bm\theta^t_k,H)-\bm g^t_k(\bm\theta_k^t, H_k)||^2&\leq 2(||\nabla L(\bm\theta^t_k,H)||^2+||\bm g^t_k(\bm\theta_k^t, H_k)||^2)\\
        &=2(K^2+1)G^2,
    \end{aligned}
\end{equation}
where the first inequality follows $(a-b)^2\leq 2(a^2+b^2)$.

For Shuffle-QUDIO, the discrepancy $\mathbb{E}_{H_k|\bm\theta_k^t}[||\nabla L(\bm\theta^t_k,H)-\bm g^t_k(\bm\theta_k^t, H_k)||^2]$ is quantified by taking expectation over the randomly shuffling Hamitonians $H_k$ given parameters $\bm{\theta}_k^t$, i.e., 

\begin{equation}\label{eqn:shuffle-ideal}
    \begin{aligned}
        &\mathbb{E}_{H_k|\bm\theta_k^t}[||\nabla L(\bm\theta^t_k,H)-\bm g^t_k(\bm\theta_k^t, H_k)||^2]\\
    =&\mathbb{E}_{H_k|\bm\theta_k^t}[||\nabla L(\bm\theta^t_k,H)||^2]-2\mathbb{E}_{H_k|\bm\theta_k^t}[\nabla^T L(\bm\theta^t_k,H)\bm g^t_k(\bm\theta_k^t, H_k)]\\
        &+\mathbb{E}_{H_k|\bm\theta_k^t}[||\bm g^t_k(\bm\theta_k^t, H_k)||^2]\\
        =&(1-\frac{2}{K})\mathbb{E}_{H_k|\bm\theta_k^t}[||\nabla L(\bm\theta^t_k,H)||^2]+\mathbb{E}_{H_k|\bm\theta_k^t}[||\bm g^t_k(\bm\theta_k^t, H_k)||^2]\\
        \leq& (1-\frac{2}{K})||\nabla L(\bm\theta^t_k,H)||^2+G^2\\
        \leq&(K-1)^2G^2,
    \end{aligned}
\end{equation}
where the second equality is derived by utilizing Lemma \ref{the:shuffle}, the first inequality comes from the bound of gradient norm in Lemma 3, and the last inequality uses the induced bound of $||\nabla L(\bm\theta,H)||^2$ and holds when $K\geq 2$ required by the condition $1-\frac{2}{K}\geq 0$.

\section{Proof of Theorem \ref{the:convergence}}\label{app:proof1}

Introduce the ancillary variables
\begin{gather}
    \bm\theta^{t} = \frac{1}{K}\sum_{k=1}^K\bm\theta^t_k,\quad \overline{\bm g}^t=\frac{1}{K}\sum_{k=1}^K\overline{\bm g}^t_k, \quad\bm\theta^{t+1}-\bm\theta^t=\eta \frac{1}{K}\sum_{k=1}^K\overline{\bm g}^t_k=\eta \overline{\bm g}^t.
\end{gather}

According to the $F$-Lipschitz continuity of loss function in Lemma \ref{lemma:F1}, we have
\begin{equation}\label{eqn:lipschitz}
    \begin{aligned}
        L(\bm\theta^{t+1})-L(\bm\theta^t)&\leq -\eta\braket{\nabla L(\bm\theta^t),\frac{1}{K}\sum_{k=1}^K\overline{\bm g}^t_k}+\frac{\eta^2F}{2K^2}||\sum_{k=1}^K\overline{\bm g}^t_k||^2\\
        &=-\frac{\eta}{2}||\nabla L(\bm\theta^t)||^2-\frac{\eta}{2K^2}||\sum_{k=1}^K\overline{\bm g}^t_k||^2+\frac{\eta}{2}||\nabla L(\bm\theta^t)-\frac{1}{K}\sum_{k=1}^K\overline{\bm g}^t_k||^2+\frac{\eta^2F}{2K^2}||\sum_{k=1}^K\overline{\bm g}^t_k||^2\\
        &=-\frac{\eta}{2}||\nabla L(\bm\theta^t)||^2+\frac{\eta}{2}\underbrace{||\nabla L(\bm\theta^t)-\frac{1}{K}\sum_{k=1}^K\overline{\bm g}^t_k||^2}_{T1}+\frac{\eta(\eta F-1)}{2K^2}||\sum_{k=1}^K\overline{\bm g}^t_k||^2,
    \end{aligned}
\end{equation}
where $\nabla L(\bm\theta^t)=\nabla L(\bm\theta^t, H)=\sum_{k=1}^K\bm g^t_k(\bm\theta^t, H_k^t)$ and $\bm g^t_k=\bm g^t_k(\bm\theta^t_k, H_k^t)$. Note that $H=\sum_{k=1}^MH_k$ is a constant all the time, which is the reason why the superscript $t$ is discarded. The first equality is derived by utilizing $\braket{\bm a,\bm b}=\frac{1}{2}(||\bm a||^2+||\bm b||^2-||\bm a-\bm b||^2)$. The second equality holds by merging the second term and the fourth term in the first equality.

Next,  the term $T1$ in Eq.~(\ref{eqn:lipschitz}) yields

\begin{equation}\label{eqn:grad_diff}
    \begin{aligned}
        ||\nabla L(\bm\theta^t)-\frac{1}{K}\sum_{k=1}^K\overline{\bm g}^t_k||^2&=||\nabla L(\bm\theta^t,H)-\frac{1}{K}\sum_{k=1}^K\overline{\bm g}^t_k(\bm\theta_k^t, H_k)||^2\\
        &=||\frac{1}{K}\sum_{k=1}^K\nabla L(\bm\theta^t,H)-\frac{1}{K}\sum_{k=1}^K\overline{\bm g}^t_k(\bm\theta_k^t, H_k)||^2\\
        &=||\frac{1}{K}\sum_{k=1}^K(\nabla L(\bm\theta^t,H)-\overline{\bm g}^t_k(\bm\theta_k^t, H_k))||^2\\
        &\leq\frac{1}{K}\sum_{k=1}^K||\nabla L(\bm\theta^t,H)-\overline{\bm g}^t_k(\bm\theta_k^t, H_k)||^2\\
        &=\frac{1}{K}\sum_{k=1}^K||\nabla L(\bm\theta^t,H)-\nabla L(\bm\theta^t_k,H)+\nabla L(\bm\theta^t_k,H)-\overline{\bm g}^t_k(\bm\theta_k^t, H_k)||^2\\
        &\leq\frac{2}{K}\sum_{k=1}^K[||\nabla L(\bm\theta^t,H)-\nabla L(\bm\theta^t_k,H)||^2+||\nabla L(\bm\theta^t_k,H)-\overline{\bm g}^t_k(\bm\theta_k^t, H_k)||^2]\\
        &\leq\frac{2}{K}\sum_{k=1}^K[F^2||\bm\theta^t-\bm\theta^t_k||^2+\underbrace{||\nabla L(\bm\theta^t_k,H)-\overline{\bm g}^t_k(\bm\theta_k^t, H_k)||^2}_{T2}],
    \end{aligned}
\end{equation}
where the first inequality follows the Jensen's inequality $||\frac{1}{n}\sum_{i=1}^n\bm a_i||^2\leq \frac{1}{n}\sum_{i=1}^n||\bm a_i||^2$, the second inequality holds because of the triangle inequality $||\bm a+\bm b||\leq||\bm a||+||\bm b||$, the last inequality is derived by $F$-Lipschitz continuity condition of gradient function in Lemma  \ref{lemma:F2}. 

We first calculate the upper bound of $||\bm\theta^t-\bm\theta^t_k||^2$. Assume the latest model synchronization happens at the iteration $t_c$ with  $t-t_c<W$, then $\bm\theta_k^{t_c+1}=\bm\theta^{t_c}$. According to the gradient descent rule, the parameter $\bm\theta_k^t$ is derived as
\begin{equation}
    \bm\theta^t_k=\bm\theta^{t-1}_k-\eta \overline{\bm g}_k^{t-1}(\bm\theta^{t-1}_k,H_k)=\bm\theta^{t_c+1}-\sum_{j=t_c+1}^{t-1}\eta \overline{\bm g}_k^j(\bm\theta^j_k,H_k),
\end{equation}
where the subscript of parameters $\bm\theta^{t_c+1}$ in the second equality is discarded without ambiguity because $\bm\theta^{t_c+1}_k=\bm\theta^{t_c+1}_l,\forall k,l\in \{1,...,K\}$. Then
\begin{equation}
    \bm\theta^t=\frac{1}{K}\sum_{k=1}^K\bm\theta^t_k=\bm\theta^{t_c+1}-\frac{1}{K}\sum_{k=1}^K\sum_{j=t_c+1}^{t-1}\eta \overline{\bm g}_k^j(\bm\theta^j_k,H_k)
\end{equation}

Therefore, the deviation between local quantum model $\bm\theta^t_l$ (note that we use notation $\bm\theta^t_l$ instead of $\bm\theta^t_k$ to avoid the confusion between a specified quantum worker $l$ and the general representation of the $k$-th worker    in the following derivation) and the  average model $\bm\theta^t$ is measured as
\begin{equation}\label{eqn:theta_diff}
    \begin{aligned}
        ||\bm\theta^t_l-\bm\theta^t||^2 &= ||\frac{1}{K}\sum_{k=1}^K\sum_{j=t_c+1}^{t-1}\eta \overline{\bm g}_k^j(\bm\theta^j_k,H_k)-\sum_{j=t_c+1}^{t-1}\eta \overline{\bm g}_l^j(\bm\theta^j_l,H_l)||^2\\
        &=\eta^2||\frac{1}{K}\sum_{k=1}^K\sum_{j=t_c+1}^{t-1}(\overline{\bm g}_k^j(\bm\theta^j_k,H_k)- \overline{\bm g}_l^j(\bm\theta^j_l,H_l))||^2\\
        &\leq \eta^2W\sum_{j=t_c+1}^{t-1}||\frac{1}{K}\sum_{k=1}^K(\overline{\bm g}_k^j-\overline{\bm g}_l^j)||^2\\
        &\leq \frac{\eta^2W}{K}\sum_{j=t_c+1}^{t-1}\sum_{k=1}^K||\overline{\bm g}_k^j-\overline{\bm g}_l^j||^2\\
        &\leq \frac{2\eta^2W^2(K-1)(1-p)^2G^2}{K},
    \end{aligned}
\end{equation}
where the first and second inequalities follows $||\sum_{i=1}^Na_i||^2\leq N\sum_{i=1}^N||a_i||^2$, and the last inequality is deduced based on the bound of gradient.

We now derive the upper bound of the second term $T2$ in the last inequality. Recall Lemma \ref{lemma:bg} such that the norm of the gradient of each worker is bounded by $||\bm g_k(\bm\theta, H_k)||^2\leq G^2$. We have $||\nabla L(\bm\theta,H)||^2=||\sum_{k=1}^K\bm g_k(\bm\theta, H_k)||^2\leq K^2G^2$ and $||\overline{\bm g}_k(\bm\theta, H_k)||^2\leq (1-p)^2G^2$.

\noindent\textit{\underline{Convergence of QUDIO.}} For the case of QUDIO where each quantum processor is assigned fixed partial Hamiltonian terms at the beginning, $T2$ is bounded by
\begin{equation}\label{eqn:fix}
    \begin{aligned}
        ||\nabla L(\bm\theta^t_k,H)-\overline{\bm g}^t_k(\bm\theta_k^t, H_k)||^2&\leq 2(||\nabla L(\bm\theta^t_k,H)||^2+||\overline{\bm g}^t_k(\bm\theta_k^t, H_k)||^2)\\
        &=2(K^2+(1-p)^2)G^2.
    \end{aligned}
\end{equation}

Combining Eqs. (\ref{eqn:grad_diff}),  (\ref{eqn:fix})  and (\ref{eqn:theta_diff}), we can quantify the progress of one global iteration under the fixed Hamiltonian partition strategy as
\begin{equation}
    \begin{aligned}
        L(\bm\theta^{t+1})-L(\bm\theta^t)&\leq -\frac{\eta}{2}||\nabla L(\bm\theta^t)||^2+\frac{\eta}{2}||\nabla L(\bm\theta^t)-\frac{1}{K}\sum_{k=1}^K\overline{\bm g}^t_k||^2+\frac{\eta(\eta F-1)}{2K^2}||\sum_{k=1}^K\overline{\bm g}^t_k||^2\\
        &\leq -\frac{\eta}{2}||\nabla L(\bm\theta^t)||^2+\frac{\eta}{2}\frac{2}{K}\sum_{k=1}^K[F^2||\bm\theta^t-\bm\theta^t_k||^2+||\nabla L(\bm\theta^t_k,H)-\overline{\bm g}^t_k(\bm\theta_k^t, H_k)||^2]+\frac{\eta(\eta F-1)}{2K^2}||\sum_{k=1}^K\overline{\bm g}^t_k||^2\\
        &\leq -\frac{\eta}{2}||\nabla L(\bm\theta^t)||^2+\frac{\eta}{2}\frac{2}{K}\sum_{k=1}^K[F^2\frac{2\eta^2W^2(K-1)(1-p)^2G^2}{K}+2(K^2+(1-p)^2)G^2]+\frac{\eta(\eta F-1)(1-p)^2G^2}{2}\\
        &= -\frac{\eta}{2}||\nabla L(\bm\theta^t)||^2+F^2\frac{2\eta^3W^2(K-1)(1-p)^2G^2}{K}+2\eta(K^2+(1-p)^2)G^2+\frac{\eta(\eta F-1)(1-p)^2G^2}{2}
    \end{aligned}
\end{equation}
where the last term in the last inequality follows $||\sum_{i=1}^n\bm a_i||^2\leq n\sum_{i=1}^n||\bm a_i||^2$.

Rearranging the inequality above and summing over $t$, we achieve
\begin{equation}\label{eqn:convergence-fix}
    \begin{aligned}
        \frac{1}{T}\sum_{t=1}^T||\nabla L(\bm\theta^t)||^2&\leq \frac{2(L(\bm\theta^1)-L(\bm\theta^{T+1}))}{\eta T}+F^2\frac{4\eta^2W^2(K-1)(1-p)^2G^2}{KT}+\frac{4(K^2+(1-p)^2)G^2}{T}+\frac{(\eta F-1)(1-p)^2G^2}{T}\\
        &= \frac{2(L(\bm\theta^1)-L(\bm\theta^{T+1}))}{\eta T}+\frac{(4F^2\eta^2W^2(K-1)(1-p)^2+\bm{4K^3}+K(\eta F+3)(1-p)^2)G^2}{KT}
    \end{aligned}
\end{equation}

\medskip
\noindent\textit{\underline{Convergence of Shuffle-QUDIO.}}
For the case of Shuffle-QUDIO where the whole Hamiltonian terms are shuffled and reassigned to local processors during each iteration, we can obtain a tighter bound for $T2$ by taking expectation over the random local Hamiltonians $H_k$ given parameters $\bm\theta_k^t$
\begin{equation}\label{eqn:shuffle}
    \begin{aligned}
        \mathbb{E}_{H_k|\bm\theta_k^t}[||\nabla L(\bm\theta^t_k,H)-\overline{\bm g}^t_k(\bm\theta_k^t, H_k)||^2]=&\mathbb{E}_{H_k|\bm\theta_k^t}[||\nabla L(\bm\theta^t_k,H)||^2]-2\mathbb{E}_{H_k|\bm\theta_k^t}[\nabla^T L(\bm\theta^t_k,H)\overline{\bm g}^t_k(\bm\theta_k^t, H_k)]\\
        &+\mathbb{E}_{H_k|\bm\theta_k^t}[||\overline{\bm g}^t_k(\bm\theta_k^t, H_k)||^2]\\
        \leq& ||\nabla L(\bm\theta^t_k,H)||^2-\frac{2(1-p)}{K}||\nabla L(\bm\theta^t_k,H)||^2+(1-p)^2G^2\\
        \leq&(K-1+p)^2G^2,
    \end{aligned}
\end{equation}
where the first inequality is based on Lemma \ref{the:shuffle} and property of bounded gradients of each worker, and the second inequality is derived by applying the induced bound of $||\nabla L(\bm\theta^t_k,H)||^2$. Note that we introduce the implicit constraint $K>2(1-p)$ to assure the coefficient of $||\nabla L(\bm\theta^t_k,H)||^2$ in the first inequality is non-negative.

On the other hand, when shuffling the Hamiltonian terms in every iteration, the loss reduction is bounded by
\begin{equation}
    \begin{aligned}
        \mathbb{E}_{\bm{h}|\bm\theta^t}[L(\bm\theta^{t+1})-L(\bm\theta^t)]\leq& -\frac{\eta}{2}||\nabla L(\bm\theta^t)||^2+\frac{\eta}{2}\mathbb{E}_{\bm{h}|\bm\theta^t}[||\nabla L(\bm\theta^t)-\frac{1}{K}\sum_{k=1}^K\overline{\bm g}^t_k||^2]+\frac{\eta(\eta F-1)}{2K^2}\mathbb{E}_{\bm{h}|\bm\theta^t}[||\sum_{k=1}^K\overline{\bm g}^t_k||^2]\\
        \leq& -\frac{\eta}{2}||\nabla L(\bm\theta^t)||^2+\frac{\eta}{2}\frac{2}{K}\sum_{k=1}^K[F^2||\bm\theta^t-\bm\theta^t_k||^2+\mathbb{E}_{\bm{h}|\bm\theta^t}[||\nabla L(\bm\theta^t_k,H)-\overline{\bm g}^t_k(\bm\theta_k^t, H_k)||^2]]\\
        &+\frac{\eta(\eta F-1)}{2K^2}\mathbb{E}_{\bm{h}|\bm\theta^t}[||\sum_{k=1}^K\overline{\bm g}^t_k||^2]\\
        =&-\frac{\eta}{2}||\nabla L(\bm\theta^t)||^2+\frac{\eta}{2}\frac{2}{K}\sum_{k=1}^K[F^2\frac{2\eta^2W^2(K-1)(1-p)^2G^2}{K}+(K-1+p)^2G^2]\\
        &+\frac{\eta(\eta F-1)(1-p)^2G^2}{2}\\
        =&-\frac{\eta}{2}||\nabla L(\bm\theta^t)||^2+F^2\frac{2\eta^3W^2(K-1)G^2}{K}+\eta(K-1+p)^2G^2+\frac{\eta(\eta F-1)(1-p)^2G^2}{2}.
    \end{aligned}
\end{equation}

Then the convergence of gradient is quantified by
\begin{equation}\label{eqn:convergence-shuffle}
    \begin{aligned}
        \frac{1}{T}\sum_{t=1}^T||\nabla L(\bm\theta^t)||^2&\leq \frac{2(L(\bm\theta^1)-L(\bm\theta^{T+1}))}{\eta T}+\frac{(4F^2\eta^2W^2(K-1)+\bm{2K^2(K-2+2p)}+K(\eta F+1)(1-p)^2)G^2}{KT}.
    \end{aligned}
\end{equation}

Comparing Eqs. (\ref{eqn:convergence-fix}) and (\ref{eqn:convergence-shuffle}), especially for the terms highlighted by the bold face, it is obvious that Shuffle-QUDIO achieves faster convergence than original QUDIO.

\section{Aggregation methods for quantum distributed optimization}\label{app:aggre}

The framework of QUDIO can be roughly deconstructed into two alternating operations, including \textit{local updates} and \textit{global synchronization}. While the former operation is upgraded by introducing the shuffle operation, the latter can be also modified by more advanced techniques. In the original Shuffle-QUDIO in Alg.~\ref{alg:shuffle-qudio}, the average aggregation method is simply employed to merge the information from distributed nodes, which may be sub-optimal without considering the differences among these nodes. In this section, we give detailed descriptions about another three novel aggregation algorithms.

\textbf{Random aggregation.} For each distributed quantum model with parameter $\bm{\theta}_{i}^{(t,W)}$ after completing local updates, the synchronized parameter $\bm{\theta}^{t+1}=\bm{\theta}_{j}^{(t,W)}$, where $j\in [K]$ is randomly generated.

\textbf{Median aggregation.} Different from average aggregation and random aggregation, median aggregation utilizes the loss value of every local quantum model as reference to determine the synchronized quantum model. To be concrete, assuming the loss of the $i$-th quantum processor at the $t$-th synchronization is denoted by $L^{(t)}_i$, the quantum model run on processor $j$ whose loss value $L^{(t)}_j$ is the median of $\{L^{(t)}_i\}_{i=1}^K$ is selected as the synchronized model $\bm{\theta}^{t+1}=\bm{\theta}_{j}^{(t,W)}$.

\textbf{Weighted aggregation.} Instead of simply utilizing a single local quantum model with median loss as the aggregated model, weighted aggregation merges all local quantum models in a weighted summation fashion. The direct motivation is that a local quantum model achieving lower loss should contribute more to the synchronized model. Similar to median aggregation method, we first collect the loss value of every local quantum model $\{L^{(t)}_i\}_{i=1}^K$ to compute the weights. Specifically, a monotone decreasing function is first applied to the loss and then a softmax function is adopted to obtained a normalized weight vector. The mathematical process is formulated as
\begin{equation}
    w^{(t)}_i=\frac{\exp{(-L^{(t)}_i)}}{\sum_{j\in [K]}\exp{(-L^{(t)}_j)}}.
\end{equation}
Based on the loss-induced weights, the synchronized quantum model is obtained as $\bm{\theta}^{(t + 1)} = \sum_{i=1}^K w^{(t)}_i\bm{\theta}^{(t,W)}_i$.

\section{Optimization trajectory visualization of Shuffle-QUDIO}\label{app:traj}

\begin{figure}[htp]
\centering
\includegraphics[width=0.9\textwidth]{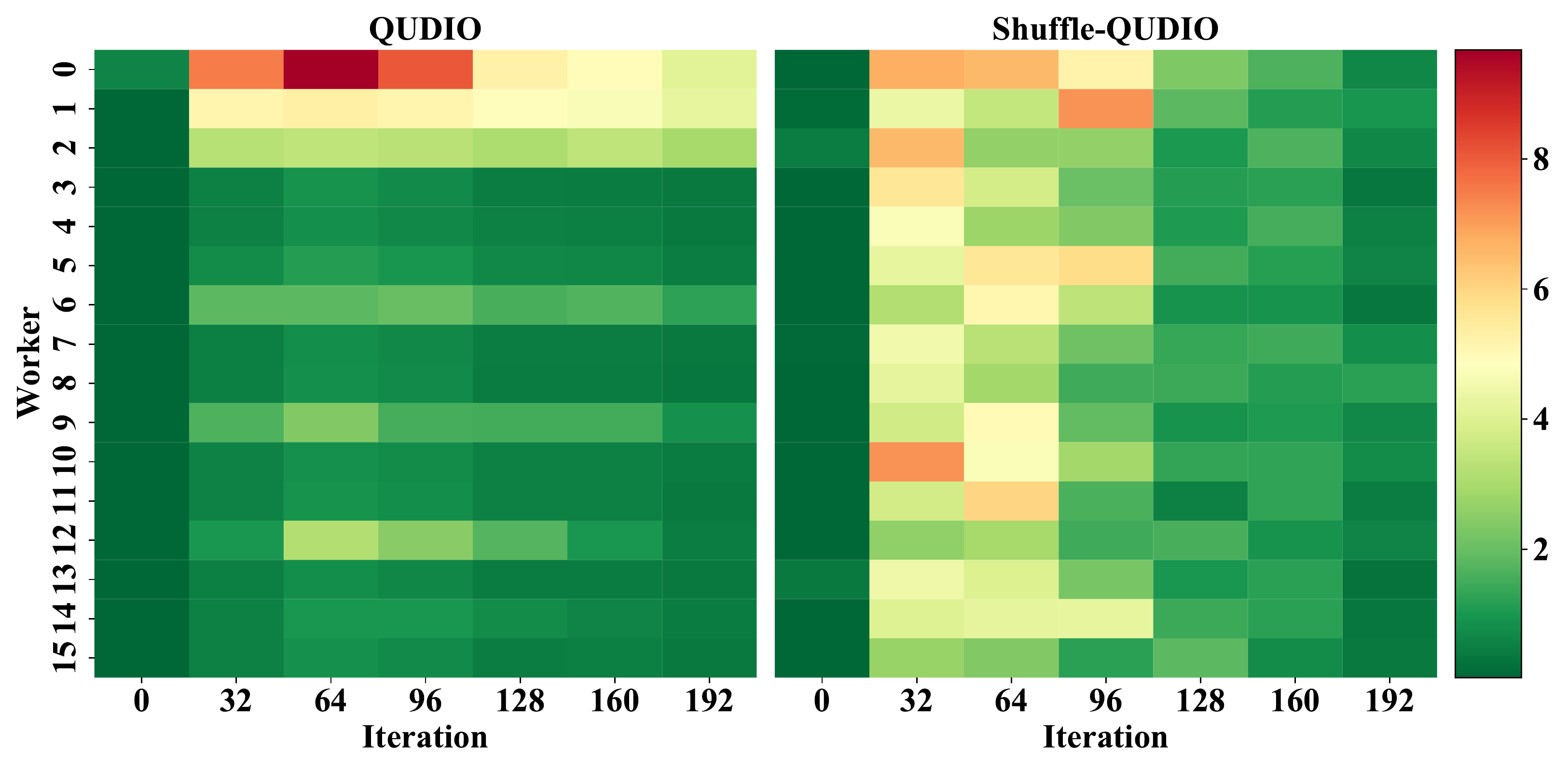}
\caption{\small{\textbf{Convergence of local quantum models to the global synchronized model.} The longitudinal axis represents the sequence number of distributed workers. The color of each grid represents the distance to the synchronized model.}}
\label{fig:param_traj}
\end{figure}

To investigate the potential changes brought by the shuffle operation that leads to faster convergence of Shuffle-QUDIO, we analyze the parameter trajectory of each worker during the whole training process. To be more concrete, the parameters $\bm{\theta}_k^t$ of the quantum circuits run on each local processor are collected before every global synchronization. Then we measure the distance between the local parameters and synchronized parameters after the average aggregation $\left\|\bm{\theta}_k^t-\frac{1}{K}\sum_{k=1}^K\bm{\theta}_k^t\right\|$. 

As shown in Fig.~\ref{fig:param_traj}, there are significant differences about parameter trajectory between vanilla QUDIO and Shuffle-QUDIO. In the early stage of optimization (the number of iterations is less than $128$), almost every local quantum model in Shuffle-QUDIO witnesses a large distance to the synchronized model. This phenomenon reveals the diversity of local models, indicating that each local model can effectively capture the nature of different aspects of the whole system. As the training goes on, the distance gradually decreases and each local model converges to the same point. By contrast, the local models in vanilla QUDIO suffer from severe bias and insufficient training. It appears that the synchronized model is always dominated by some local models, resulting in that other local models do not get trained sufficiently and the final synchronized model fails to work well for the complete system.

\end{document}